\documentclass[a4paper,12pt]{article}

\usepackage[utf8]{inputenc}
\usepackage[T2A]{fontenc}
\usepackage{amssymb}
\usepackage{amsmath}
\usepackage{mathrsfs}
\usepackage{euscript}
\usepackage[russian,english]{babel}
\usepackage{array}
\usepackage{theorem}
\usepackage[ruled]{algorithm}
\usepackage[noend]{algorithmic}
\usepackage{graphicx}
\usepackage{color}
\usepackage{ifthen}
\usepackage{url}
\usepackage{makeidx}
\usepackage{pb-diagram}
\usepackage{listings}
\usepackage{fancyhdr}
\usepackage{misccorr}
\usepackage{indentfirst}
\usepackage{tocloft}
\usepackage{soul}
\usepackage{nomencl}
\usepackage{calc}
\usepackage[all]{xy}
\usepackage{cite}

\usepackage[left=1cm, right=1cm, top=1cm, bottom=1.5cm]{geometry}


\renewcommand{\geq}{\geqslant}
\renewcommand{\leq}{\leqslant}
\renewcommand{\ge}{\geqslant}

\renewcommand{\emptyset}{\varnothing}

\newcommand\myop[1]{\mathop{\operator@font #1}\nolimits}
\newcommand\mylim[1]{\mathop{\operator@font #1}\limits}



\newenvironment{enumerate*}
{%
    \begingroup
    \renewcommand{\@listi}{%
        \topsep=\smallskipamount
        \parsep=0pt
        \parskip=0pt
        \itemsep=0pt
        \itemindent=0ex
        \labelsep=1.5ex
        \leftmargin=7ex
        \rightmargin=0pt} 
    \begin{enumerate}%
}{%
    \end{enumerate}%
    \endgroup
}


\theoremstyle{plain}
\gdef\th@plain{\normalfont
    \def\@begintheorem##1##2{%
        \item[\hskip\labelsep\theorem@headerfont \hspace*{5mm} ##1\ ##2. ]}%
    \def\@opargbegintheorem##1##2##3{%
        \item[\hskip\labelsep\theorem@headerfont \hspace*{5mm} ##1\ ##2 {\normalfont ##3.} ]}%
}

\theorempreskipamount=\smallskipamount
\theorempostskipamount=\smallskipamount
\theorembodyfont{\rmfamily}

\newtheorem{State-rm}{Предложение}

\def\qed{\rule{6pt}{6pt}}

\newtheorem{lemma}{Lemma}

\newtheorem{remark}{Remark}

\newenvironment{proof}[0]{%
    {\itshape\bfseries  Proof.}
}
{%
    \qed
    \vskip 7pt
}

%
    {\begin{algorithm}[#1]\begin{algorithmic}[1]%
        \renewcommand{\ALC@it}{%
            \refstepcounter{ALC@line}%
            \addtocounter{ALC@rem}{1}%
            \ifthenelse{\equal{\arabic{ALC@rem}}{1}}{\setcounter{ALC@rem}{0}}{}%
            \item}}%
    {\end{algorithmic}\end{algorithm}}


%
%
%



\renewcommand\emph[1]{\textit{#1}}

\usepackage{hyperref}
\usepackage{enumitem}
\usepackage{algorithmic}

\usepackage{tikz}
\usetikzlibrary{shapes.geometric, arrows}

\tikzstyle{box} = [rectangle, rounded corners, minimum width=1cm, minimum height=1cm,text centered, draw=black, fill=yellow!30]
\tikzset{oplus/.style={path picture={%
    \draw[black]
    (path picture bounding box.south) -- (path picture bounding box.north)
    (path picture bounding box.west) -- (path picture bounding box.east);
}}}
\tikzstyle{xor} = [oplus, draw = black, circle, minimum size = 0.5cm]

\newcounter{figureCOUNTER}

\newcommand{\fd}[1]{\mathbb{F}_{#1}}

\newcommand{\fdo}[1]{\widehat{\mathbb{F}}_{#1}}

\newcommand{\bobsym}[2]{B_{#2}(#1)}
\newcommand{\fdn}[1]{B(\mathbb{F}_{#1})}
\newcommand{\bobset}[1]{B^0_{#1}}
\newcommand{\bobseta}[1]{B^1_{#1}}

\newcommand{\sdsbox}[1]{S(#1)}
\newcommand{\sdfunc}[1]{H(#1)}

\usetikzlibrary{quantikz}

\newsavebox{\boxX}
\newsavebox{\boxZ}
\newsavebox{\boxH}
\newsavebox{\boxCnot}
\newsavebox{\boxSwap}
\newsavebox{\boxToffoli}


\usepackage{array}

\allowdisplaybreaks

\providecommand{\keywords}[1]{
	\medskip
    \noindent
	\small\textbf{Keywords:} #1
	}

\providecommand{\affili}[1]{
	\vspace{-1cm}
	\begin{center} 
	{\it \small #1}
	\end{center}
	\vspace{.1cm}}

\providecommand{\email}[1]{
	\vspace{-0.5cm}
	\begin{center} 
	{\tt \small #1}
	\end{center}
	\vspace{.1cm}}

\begin{document}
	
\title{Mathematical problems and solutions of the Ninth International Olympiad in Cryptography NSUCRYPTO}
	
\author{V.\,A.~Idrisova$^{1}$, N.\,N.~Tokareva$^{1}$, A.\,A.~Gorodilova$^{1}$, I.\,I.~Beterov$^{2}$, T.\,A.~Bonich$^{1}$,\\ E.\,A.~Ishchukova$^{3}$, N.\,A.~Kolomeec$^{1}$, A.\,V.~Kutsenko$^{1}$, E.\,S.~Malygina$^{4}$,\\ I.\,A.~Pankratova$^{5}$, M.\,A.~Pudovkina$^{6}$, A.\,N.~Udovenko$^{7}$}

\date{}

\maketitle

\affili{$^{1}$Novosibirsk State University, Novosibirsk, Russia\\
              $^{2}$Rzhanov Institute of Semiconductor Physics, Novosibirsk, Russia\\
              $^{3}$Southern Federal University, Rostov-on-Don, Russia\\
              $^{4}$Immanuel Kant Baltic Federal University, Kaliningrad, Russia\\
              $^{5}$Tomsk State University, Tomsk, Russia\\
              $^{6}$National Research Nuclear University MEPhI, Moscow, Russia\\
              $^{7}$CryptoExperts, Paris, France
              } 

\email{vvitkup@yandex.ru, crypto1127@mail.ru, gorodilova@math.nsc.ru, beterov@isp.nsc.ru,\\ tatianabonich@yandex.ru, uaishukova@sfedu.ru, kolomeec@math.nsc.ru, alexandrkutsenko@bk.ru, EMalygina@kantiana.ru, pank@mail.tsu.ru, maricap@rambler.ru, aleksei.udovenko1@gmail.com}

\abstract{Every year the International Olympiad in Cryptography Non-Stop University CRYPTO (NSUCRYPTO) offers mathematical problems for university and school students and, moreover, for professionals in the area of cryptography and computer science. The mail goal of NSUCRYPTO is to draw attention of students and young researchers to modern cryptography and raise awareness about open problems in the field. We present problems of NSUCRYPTO'22 and their solutions. There are 16 problems on the following topics: ciphers, cryptosystems, protocols, e-money and cryptocurrencies, hash functions, matrices, quantum computing, S-boxes, etc. They vary from easy mathematical tasks that could be solved by school students to open problems that deserve separate discussion and study. So, in this paper, we consider several open problems on three-pass protocols, public and private keys pairs, modifications of discrete logarithm problem, cryptographic permutations and quantum circuits.

\keywords{cryptography, ciphers, protocols, number theory, S-boxes, quantum circuits, matrices, hash functions, interpolation, cryptocurrencies, postquantum cryptosystems, Olympiad, NSUCRYPTO.}}

\section{Introduction}

{\bf Non-Stop University CRYPTO} ({\bf NSUCRYPTO}) is the unique interna\-tional competition for professionals, school and university students, providing various problems on theoretical and practical aspects of modern cryptography, see \cite{nsucrypto}. The main goal of the olympiad is to draw attention of young researchers not only to competetive fascinating tasks, but also to sophisticated and tough scientific problems at the intersection of mathematics and cryptography. That is why each year there are several open problems in the list of tasks that require rigorous studying and deserve a separate publication in case of being solved. Since NSUCRYPTO holds via the Internet, everybody can easily take part in it.  Rules of the Olympiad, the archive of problems, solutions and many more can be found at the official website \cite{nsucrypto-rules}. 

The first Olympiad was held in 2014, since then more than 3000 students and specialists from almost 70 countries took part in it. The Program committee now is including 22 members from cryptographic groups all over the world. Main organizers and partners are Cryptographic Center (Novosibirsk), Mathematical Center in Akademgorodok, Novosibirsk State University, KU Leuven, Tomsk State University, Belarusian State University, Kovalevskaya North-West Center of Mathematical~Research and Kryptonite.

This year 37 participants in the first round and 27 teams in the second round from 14 countries became the winners (see the list \cite{nsucrypto-winners}). This year we proposed 16 problems to participants and 5 of them were entirely open or included some open questions. Totally, there were 623 particpants from 36 countries. 

Following the results of each Olympiad we also publish scientific articles with detailed solutions and some analysis of the solutions proposed by the participants, including advances on unsolved ones, see \cite{nsucrypto-2014, nsucrypto-2015,  nsucrypto-2016, nsucrypto-2017, nsucrypto-2018, nsucrypto-2019, nsucrypto-2020, nsucrypto-2021}.

\section{An overview of open problems}

One of the main characteristic of the Olympiad is that unsolved scientific problems are proposed to the participants in addition to problems with known solutions. All 31 open problems that were offered since the first NSUCRYPTO can be found here \cite{nsucrypto-unsolved}. Some of these problems are of great interest to cryptographers and mathematicians for many years. These are such problems as ``APN permutation'' (2014), ``Big Fermat numbers'' (2016), ``Boolean hidden shift and quantum computings'' (2017), ``Disjunct Matrices'' (2018), and others. 

Despite that it is marked that the problem is open and therefore it requires a lot of hard work to advance, some of the problems we suggested are solved or partially solved by our participants during the Olympiad. For example, problems ``Algebraic immunity'' (2015), ``Sylvester matrices'' (2018), ``Miller~---~Rabin revisited'' (2020) were solved completely. 
Also, partial solutions were suggested for problems ``Curl27'' (2019), ``Bases'' (2020),  ``Quantum error correction'' (2021) and ``s-Boolean sharing'' (2021). 

Moreover, some researchers continue to work on solutions even after the Olym\-piad was over. For example, authors of~\cite{20-Kiss-Nagy} proposed a complete solution for problem ``Orthogonal arrays'' (2018). Partial solutions for another open problem, ``A secret sharing'', (2014) were presented in~\cite{17-Geut},~\cite{19-Geut}, and a recursive algorithm for finding the solution was proposed in~\cite{19-Ayat}.

This year, two open problems ware solved during the Olympiad. These are problems ``Public keys for e-coins'' (see Problem \hyperlink{pr-publickeys}{4.10}) and ``Quantum entanglement'' (see Problem \hyperlink{pr-entanglement}{4.16}).

\section{Problem structure of the Olympiad}
\label{problem-structure}

There were 16 problems stated during the Olympiad, some of them were included in both rounds (Tables\;1,\,2).
Section A of the first round consisted of six problems, while Section B of the first round consisted of eight problems. The second round was composed of eleven problems; five of them included unsolved questions (awarded special prizes). 

\begin{table}[H]
\centering\footnotesize
\begin{tabular}{cc}
\begin{tabular}{|c|l|c|}
  \hline
  N & Problem title & Max score \\
  \hline
  \hline
  1 & \hyperlink{pr-numberspoints}{Numbers and points} & 4 \\
    \hline
  2 & \hyperlink{pr-wallet}{Wallets}  & 4 \\
    \hline
  3 &  \hyperlink{pr-long}{A long-awaited event} & 4\\
    \hline
  4 &  \hyperlink{pr-hiddenprimes}{Hidden primes} & 4 \\
  \hline
    5 & \hyperlink{pr-facetoface}{Face-to-face} & 4 \\
 \hline
  6 &  \hyperlink{pr-cryptolocks}{Crypto locks} & 4 + open problem\\
    \hline

\end{tabular}

&

\begin{tabular}{|c|l|c|}
  \hline
  N & Problem title & Max score \\
  \hline
    \hline
  1 & \hyperlink{pr-numberspoints}{Numbers and points} & 4 \\
  \hline
 2 &  \hyperlink{pr-hiddenprimes}{Hidden primes} & 4\\
   \hline
   3 & \hyperlink{pr-facetoface}{Face-to-face} & 4 \\
 \hline
 4    & \hyperlink{pr-matrixreduction}{Matrix and reduction} & 4 \\
  \hline 
  5 & \hyperlink{pr-reversingagate}{Reversing a gate} & 6 \\
    \hline
  6 & \hyperlink{pr-bob}{Bob's symbol} & 8 \\
   \hline
  7 & \hyperlink{pr-cryptolocks}{Crypto locks} & 4 + open problem \\
    \hline
  8 & \hyperlink{pr-publickeys}{Public keys for e-coins} & open problem \\
    \hline
\end{tabular}
\\
{\bf Section A}   &  {\bf Section B}\\
\end{tabular}

\medskip


\caption{{Problems of the first round} }

\end{table}

\vspace{-0.6cm}

\begin{table}[H]
\centering\footnotesize
\begin{tabular}{|c|l|c|}
  \hline
  N & Problem title & Max score \\
  \hline
    \hline
  1 &  \hyperlink{pr-cpproblem}{CP problem} & open problem\\
    \hline
  2 &  \hyperlink{pr-interpolation}{Interpolation with errors} & 8\\
    \hline
  3 &  \hyperlink{pr-HAS01}{HAS01} & 8 \\
    \hline 
  4 &  \hyperlink{pr-PHIGFS}{Weaknesses of the PHIGFS} & 8 \\
    \hline
  5 &  \hyperlink{pr-SBOX}{Super dependent S-box} & 6 + open problem\\
    \hline
  6 &  \hyperlink{pr-entanglement}{Quantum entanglement} &  6 + open problem\\
    \hline
  7 &\hyperlink{pr-numberspoints}{Numbers and points}  & 4\\
    \hline
  8 & \hyperlink{pr-bob}{Bob's symbol} & 8\\
    \hline
 9 & \hyperlink{pr-cryptolocks}{Crypto locks}  & 4 + open problem \\
    \hline
  10 & \hyperlink{pr-publickeys}{Public keys for e-coins}  & open problem \\
     \hline
  11 &  \hyperlink{pr-long}{A long-awaited event} & 4\\
  
   \hline

\end{tabular}
\medskip

\caption{{Problems of the second round} }
\label{Probl-Second}

\end{table}

\section{Problems and their solutions}\label{problems}

In this section, we formulate all the problems of 2022 year Olympiad and present their detailed solutions, in some particular cases we also pay attention to solutions proposed by the participants.

\subsection{Problem ``Numbers and points''}

\subsubsection{Formulation}
\hypertarget{pr-numberspoints}{}

Decrypt the message in Fig.~\ref{NaP_fig1}.

\begin{figure}[ht]
\centering
\includegraphics[width=0.7\textwidth]{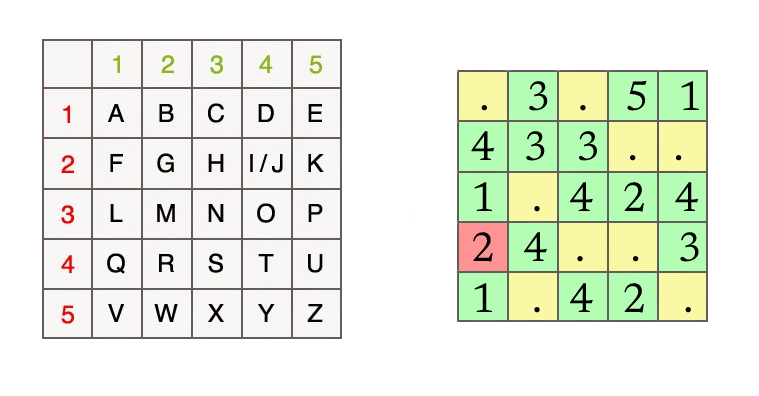}

\refstepcounter{figureCOUNTER}
{\small Fig.~\arabic{figureCOUNTER}. The illustration for the problem ``Numbers and points''}
\label{NaP_fig1}
\end{figure}

\subsubsection{Solution}
There is a board made up of numbers and dots on the right half of Fig.~\ref{NaP_fig1}. One cell is highlighted in red. The path along which the sensible plaintext is encrypted begins with it (Fig.~\ref{NaP_fig2}). The ciphertext has a <<number~--~number~--~dot>>  pattern. The ciphertext is the following:
\begin{center}
	$21$ . $42$ . $24$ . $15$ . $33$ . $14$ . 
\end{center}

The table in the left half of Fig.~\ref{NaP_fig1} refers to the Polybius square. Each letter is represented by its coordinates in the grid. Comparing the numbers from the ciphertext with the coordinates of the letters in the Polybius square, we get:
\begin{center}
	F . R. (I/J) . E . N . D .
\end{center}
Picking I from (I/J), we get the sensible plaintext \textbf{FRIEND}.

\begin{figure}[!h]
	\centering
		\includegraphics[scale=0.45]{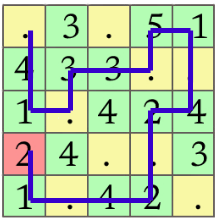} 

  \refstepcounter{figureCOUNTER}
{\small Fig.~\arabic{figureCOUNTER}. The path along which the sensible plaintext is encrypted}
  \label{NaP_fig2} 
\end{figure}

\bigskip
The problem looked simple but there was only one complete solution proposed by the team of Robin Jadoul (Belgium), Esrever Yu (Taiwan) and Jack Pope (United Kingdom).

\subsection{Problem ``Wallets''}

\subsubsection{Formulation}
\hypertarget{pr-wallet}{}

Bob has a wallet with 2022 NSUcoins. He decided to open a lot of new wallets and spread his NSUcoins among them. The platform that operates his wallets can distribute content of any wallet between 2 newly generated ones, charging 1 NSUcoin commission and removing the initial wallet.

He created a lot of new wallets, but suddenly noticed that all of his wallets contain exactly 8 NSUcoins each. Bob called the platform and told that there might be a mistake. How did he notice that?

\subsubsection{Solution}
Suppose that there were $n$ such operations, so we had $n+1$ wallets. Since 1 NSUcoin is charged for each operation, the total comission is equal to $n$. Therefore, we have $2022 - n=8(n+1)$ and $2014=9n$, but that is impossible since $n$ is a natural number. The most accurate and detailed solution was sent by Egor Desyatkov (Russia).

\subsection{Problem ``A long-awaited event''}

\subsubsection{Formulation}
\hypertarget{pr-long}{}

Bob received from Alice the secret message
\begin{center}
	{\tt L78V8LC7GBEYEE}
\end{center}
informing him about some important event.

It is known that Alice used an alphabet with $37$ characters from {\tt A} to {\tt Z}, from {\tt 0} to {\tt 9} and a space. Each of the letters is encoded as follows:
\begin{center}
	\begin{tabular}{|*{19}{c|}}
		\hline
		{\tt A} & {\tt B} & {\tt C} & {\tt D} & {\tt E} & {\tt F} & {\tt G} & {\tt H} & {\tt I} & {\tt J} & {\tt K} & {\tt L} & {\tt M} & {\tt N} & {\tt O} & {\tt P} & {\tt Q} & {\tt R} & {\tt S}  \\\hline
		$0$ & $1$ & $2$ & $3$ & $4$ & $5$ & $6$ & $7$ & $8$ & $9$ & $10$ & $11$ & $12$ & $13$ & $14$ & $15$ & $16$ & $17$ & $18$  \\\hline\hline
		{\tt T} & {\tt U} & {\tt V} & {\tt W} & {\tt X} & {\tt Y} & {\tt Z} & {\tt 0} & {\tt 1} & {\tt 2} & {\tt 3} & {\tt 4} & {\tt 5} & {\tt 6} & {\tt 7} & {\tt 8} & {\tt 9} &  \multicolumn{2}{c|}{SPACE}  \\\hline
		$19$ & $20$ & $21$ & $22$ & $23$ & $24$ & $25$ & $26$ & $27$ & $28$ & $29$ & $30$ & $31$ & $32$ & $33$ & $34$ & $35$ & \multicolumn{2}{c|}{$36$} \\\hline
	\end{tabular}
\end{center}

\smallskip

 For the encryption, Alice used a function $f$ such that $f(x)=ax^2+bx+c \pmod{37}$ for some integers $a,b,c$ and $f$ satisfies the property
\begin{equation*}
	f(x-y)-2f(x)f(y)+f(1+xy)=1 \pmod{37}\ \ \text{for any integers }x,y.
\end{equation*}

Decrypt the message that Bob has received.

\subsubsection{Solution}

Let $y=0$:
\begin{equation*}
	f(x)-2f(x)f(0)+f(1)=1 \pmod {37},
\end{equation*}
\begin{equation*}
	f(x)(1-2f(0))=1-f(1) \pmod {37}.
\end{equation*}
Since $f$ is not a constant function, we have that both sides of the equation above are zeros, so $f(0)=19\pmod {37}$ and $f(1)=1\pmod {37}$. From this we obtain that~$c=19$. Let $y=-1$:
\begin{equation*}
	f(1+x)+f(1-x)=1+2f(x)f(-1) \pmod {37}.
\end{equation*}
By replacing $x \mapsto (-x)$ we get
\begin{equation*}
	f(1-x)+f(1+x)=1+2f(-x)f(-1) \pmod {37}.
\end{equation*}
Left sides of the last two expressions are equal, therefore $f(x)=f(-x)\pmod {37}$ that is~$f$ is even function, provided $f(-1)\ne0 \pmod {37}$. We can check the last condition buy putting $x=0$, $y=1$ to the initial relation on~$f$, that yields $f(-1)=1\ne0 \pmod {37}$. Therefore, $f(x)=f(-x)\pmod {37}$ for any integer~$x$, hence~$b=0$.

From $f(1)=1\pmod {37}$ we reveal the value of the coeffecient~$a$ that is equal to $19$. Thus, we have~$f(x)=19\big(x^2+1\big)\pmod {37}$, then for recovering of the plaintext we use the inverse expression $x=\pm\sqrt{2f(x)+36}\pmod {37}$ and for every symbol of the ciphertext we choose the appropriate variant of the corresponding symbol of the plaintext:
\begin{equation*}
	\text{L78V8LC7GBEYEE} \, \hookrightarrow \, \text{NSUCRYPTO 2022}.
\end{equation*}

The only correct solution was sent by William Zhang (United Kingdom).

\subsection{Problem ``Hidden primes''}

\subsubsection{Formulation}
\hypertarget{pr-hiddenprimes}{}

The Olympiad team rented an office at the Business Center, 1-342 room, on 1691th street for NSUCRYPTO-2022 competition for 0 nsucoins (good deal!). Mary from the team wanted to create a task for the competition and she needed to pick up three numbers for this task. She used to find an inspiration in numbers around her and various equations with them. After some procedure she found three prime numbers! It is interesting that when Mary added the smallest number to the largest one and divided the sum by the third number, the result was also the prime number. 

Could you guess these numbers she found?

\begin{figure}[ht]
\begin{center}
\includegraphics[width=0.6\textwidth]{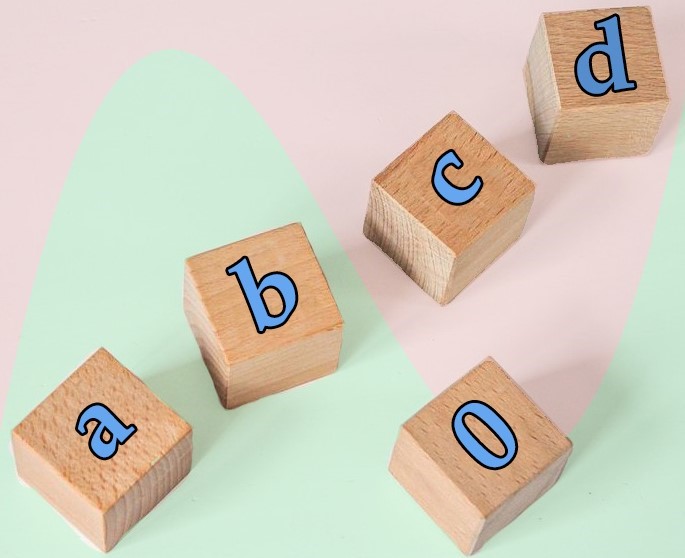}

\refstepcounter{figureCOUNTER}
{\small Fig.~\arabic{figureCOUNTER}. The illustration for the problem ``Hidden primes''}

\label{fig:cubics}
\end{center}
\end{figure}
\subsubsection{Solution}

 We may assume from the problem statement that Mary used some numbers around her and some equations with them in order to find these three numbers. We may also get from the description that she used only one procedure to find these hidden numbers. 
 
 So, all three numbers are connected by some procedure and the numbers around Mary are used, from phrase ``various equations'' we can assume that there exists some equation with these numbers as coefficients. There were 5 numbers around Mary: 1, -342, 1691, -2022 and 0.

In addition, analyzing the picture (see Fig.~\ref{fig:cubics}), you can see the curve, cubes with 4 letters: a, b, c, d and the cube with 0. The curve resembles a graph of a cubic function and the letters on the cubes look like coefficients of a cubic function. The cube with 0 gives a hint for the use of a cubic equation.

Let us substitute the numbers from the problem statement into the cubic equation. Solving the equation $x^3 - 342 x^2 + 1691 x - 2022 = 0$ we find the roots 2, 3, 337. All three numbers are prime and satisfy the condition from the statement: $(2 + 337)/3 = 113$, where 113 is also a prime number.

\bigskip

Best solutions were proposed independently by Konstantin Romanov (Russia), Vasiliy Kadykov (Russia) and Sergey Zabolotskiy (Russia).
\bigskip

\subsection{Problem ``Face-to-face''}

\subsubsection{Formulation}
\hypertarget{pr-facetoface}{}

Alice picked a new pin code (4 pairwise distinct digits from $\{1,2,\ldots,9\}$) for her credit card such that all digits have the same parity and are arranged in increasing order.
Bob and Charlie wanted to guess her pin code. Alice said that she can give each of them a hint but face-to-face only.

Bob alone came to Alice and she told him that the sum of her pin code digits is equal to the number of light bulbs in the living room chandelier. Bob answered that there is still no enough information for him to guess the code, and left.
After that, Charlie alone came to Alice and she told him that if we find the product of all pin code digits and then sum up digits of those product, this result number would be equal to the amount of books on the shelf. Charlie also answered that there is still no enough information for him to guess the code, and left.

Unfortunately, Eve was eavesdropping in the next apartment and, after Charlie had left, she immediately found out Alice pin code despite that she had never seen those chandelier and bookshelf. Could you find the pin code too?

\subsubsection{Solution}
Let $P$ be the pin code. Since all the digits of $P$ have the same parity and are arranged in increasing order, we have only six options:

\medskip

\begin{center}
		{\footnotesize
			\begin{tabular}{|c|c|c|c|}
				\hline
    Pin code $P$ & The sum of digits & The product of digits &  The sum of product digits\\
    \hline
				1357 & 16 & 105 & 6\\
				\hline
				1359 &  18 & 135 & 9\\
				\hline
				1379 & 20 & 189 & 18\\
				\hline
    1579 & 22 & 315  & 9\\
    \hline
     2468 & 20 & 384 & 15\\
				\hline
    3579 & 24 & 945 & 18\\
				\hline
   
			\end{tabular}
		}
	\end{center}
\medskip

Since Bob could not guess the code, the sum of digits must allow at least two options for the code, so, we have that $P \in \{1379, 2468\}$. Since Charlie could not guess the code either, we have the same problem for the sum of product digits and it follows that $P \in \{1359, 1579, 1379, 3579\}$. Therefore, the pin code is equal to $1379$.

Best solutions of this problem were sent by Henning Seidler (Germany), Himanshu Sheoran (India) and Phuong Hoa Nguyen (France).

\subsection{Problem ``Crypto locks''}

\subsubsection{Formulation}
\hypertarget{pr-cryptolocks}{}

Alice and Bob are wondering about the creation of a new version for the Shamir three-pass protocol. They have several ideas about it.

The Shamir three-pass protocol was developed more than 40 years ago. Recall it. Let~$p$ be a big prime number. Let Alice take two secret numbers $c_A$ and $d_A$ such that $c_A d_A = 1 \bmod{(p-1)}$. Bob takes numbers $c_B$ and $d_B$ with the same property. If Alice wants to send a secret message $m$ to Bob, where $m$ is an integer number $1< m < p-1$, then she calculates $x_1 = m^{c_A} \bmod p$ and sends it to Bob. Then Bob computes $x_2 = x_1^{c_B} \bmod p$ and forwards it back to Alice. On the third step, Alice founds $x_3 = x_2^{d_A}\bmod p$ and sends it to Bob. Finally, Bob recovers $m$ as $x_3^{d_B}\bmod p$ according to Fermat's Little~theorem.

It is possible to think about action of $c_A$ and $d_A$ over the message as about locking and unlocking, see Fig.~\ref{fig:locks}.

\begin{figure}[ht]
\begin{center}
\includegraphics[width=0.7\textwidth]{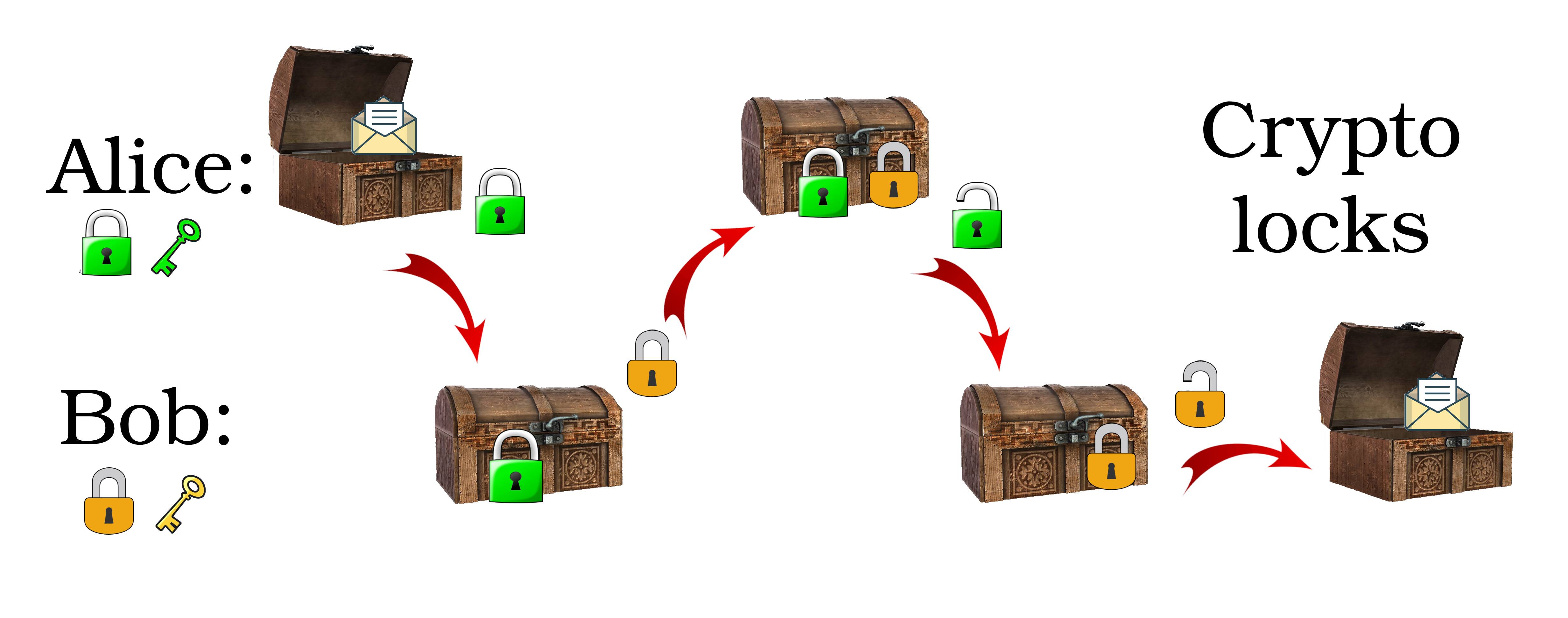}

\refstepcounter{figureCOUNTER}
{\small Fig.~\arabic{figureCOUNTER}. The illustration for the problem ``Crypto locks''}

\label{fig:locks}
\end{center}
\end{figure}

Alice and Bob decided to change the scheme by using symmetric encryption and decryption procedures instead of locking and unlocking with $c_A$, $c_B$, $d_A$ and $d_B$.

\begin{itemize}
\item[{\bf Q1}] Propose some simple symmetric ciphers that would be possible to use in such scheme. What properties for them are required? Should Alice and Bob use the same cipher (with different own keys) or not?

\item[{\bf Q2}] \underline{\color{blue}{\bf Problem for a special prize!}} Could you find such symmetric ciphers that make the modified scheme to be secure as before? Please, give your reasons and proofs.
\end{itemize}

\subsubsection{Solution}

 \textbf{Q1}. Assume that Alice and Bob use functions Enc$_A$, Dec$_A$ and Enc$_B$, Dec$_B$ for encryption and decryption, respectively. Suppose that Alice wants to send the message $m$, then the three-pass protocol will look as follows:
 
 $\bullet$ Alice calculates Enc$_A(m,k_A)$, where $k_A$ is her secret key, and sends it to Bob;
 
 $\bullet$ Bob computes Enc$_B(\text{Enc}_A(m,k_A),k_B)$, where $k_B$ is his secret key, and forwards it to Alice;
 
 $\bullet$ Finally, Alice computes Dec$_A(\text{Enc}_B(\text{Enc}_A(m,k_A),k_B),k_A)$ and sends it to Bob;
 
 In order for Bob to recover $m$ the following property must hold 
 $$
 \text{Dec}_B(\text{Dec}_A(\text{Enc}_B(\text{Enc}_A(m,k_A),k_B),k_A),k_B)=m.
 $$
 
 The most common approach was to use encryption functions that commute with each other. In that case, if Alice wants to send a secret message $m$ to Bob, then she calculates $x = m\circ k_A$ and sends it to Bob. Then Bob computes $x_2 = x\circ k_B$ and forwards it back to Alice. On the third step, Alice finds $x_3 = x_2\circ k_A^{-1}$ and sends it to Bob. Finally, the commutative property of operation $\circ$ allows Bob to recover $m$ as $x_3\circ k_B^{-1}$.
 
 \begin{remark}\label{refCryptoLocks}
 	Note that if Eve can intercept all three messages, then she can obtain $m$ if she could compute $x_2^{-1}$, since $x\circ x_3\circ x_2^{-1}=m$. As a result, all schemes that use ciphers with only XOR operation (the most common suggestion by the participants) have this weakness.
 \end{remark}

Regarding \textbf{Q2}, one interesting idea found by a few participants is to use product of matrices for encryption and decryption, with the additional condition that the matrix $M$ associated with the message $m$ is singular. That additional condition appears as a countermeasure against the attack described in Remark~\ref{refCryptoLocks}. However, such schemes require additional security analysis.

Another interesting idea suggested by the team of Himanshu Sheoran, Gyumin Roh and Yo Iida (India, South Korea, Japan) was to base the scheme on permutations that commute with each other. Note that a three-pass cryptographic protocol with a similar idea was presented in~\cite{PermThreePass}.

\subsection{Problem ``Matrix and reduction''}
\subsubsection{Formulation}
\hypertarget{pr-matrixreduction}{}

Alice used an alphabet with $30$ characters from {\tt A} to {\tt Z} and {\tt 0}, {\tt 1}, <<{\tt ,}>>, <<{\tt !}>>. Each of the letters is encoded as follows:

\begin{center}
	\begin{tabular}{|*{15}{c|}}
		\hline
		{\tt A} & {\tt B} & {\tt C} & {\tt D} & {\tt E} & {\tt F} & {\tt G} & {\tt H} & {\tt I} & {\tt J} & {\tt K} & {\tt L} & {\tt M} & {\tt N} & {\tt O} \\\hline
		$0$ & $1$ & $2$ & $3$ & $4$ & $5$ & $6$ & $7$ & $8$ & $9$ & $10$ & $11$ & $12$ & $13$ & $14$\\\hline\hline
		{\tt P} & {\tt Q} & {\tt R} & {\tt S} & {\tt T} & {\tt U} & {\tt V} & {\tt W} & {\tt X} & {\tt Y} & {\tt Z} & {\tt 0} & {\tt 1} & {\tt ,} & {\tt !}  \\\hline
		$15$ & $16$ & $17$ & $18$ & $19$ & $20$ & $21$ & $22$ & $23$ & $24$ & $25$ & $26$ & $27$ & $28$ & $29$\\\hline
	\end{tabular}
\end{center}

\smallskip
\noindent\textbf{Encryption.} The plaintext is divided into consequent subwords of length~$4$ that are encrypted independently via the same encryption $(2\times2)$-matrix~$F$ with elements from~$\mathbb{Z}_{30}$. For example, let the $j$-th subword be {\tt WORD} and the encryption matrix~$F$ be equal to
\begin{equation*}
F=\begin{pmatrix}
	11 & 9 \\
	11 & 10 \\
\end{pmatrix}.
\end{equation*}
The matrix that corresponds to {\tt WORD} is denoted by~$P_j$ and the matrix that corresponds to the result of the encryption of~{\tt WORD} is~$C_j$ and calculated as follows:
\begin{equation*}
C_j=F\cdot P_j=
\begin{pmatrix}
	11 & 9 \\
	11 & 10 \\
\end{pmatrix}
\cdot
\begin{pmatrix}
	22 & 17 \\
	14 & 3 \\
\end{pmatrix}
=
\begin{pmatrix}
	8 & 4 \\
	22 & 7 \\
\end{pmatrix}
\pmod{30},
\end{equation*}
that is the $j$-th subword of the ciphertext is {\tt IWEH}.

\bigskip

Eve has intercepted a ciphertext that was transmitted from Alice to Bob:
\begin{equation*}
	{\tt CYPHXWQE!WNKHZ0Z}
\end{equation*}
Also, she knows that the third subword of the plaintext is {\tt FORW}. Will Eve be able to restore the original message?

\subsubsection{Solution}
The third word of the plaintext is $\mathtt{FORW}$: 
\begin{equation*}
	P=\mathtt{FORW}=
	\begin{pmatrix}
		5 & 17 \\
		14 & 22 \\
	\end{pmatrix}\pmod {30}.
\end{equation*} 
The ciphertext corresponding to it:
\begin{equation*}
	C=\mathtt{!WNK}=
	\begin{pmatrix}
		29 & 13 \\
		22 & 10 \\
	\end{pmatrix}\pmod {30}.
\end{equation*} 
Since $C_3=F\cdot P_3$, where $F$ is the encryption matrix, the matrix for the decryption could have the following form:
\begin{equation*}
	D=P_3 \cdot C^{-1}_3.
\end{equation*} 
But $\mathrm{det}\big(C_3\big)=4\pmod {30}$ and $\text{gcd}(4,30)\neq 1$, that is such matrix does not exist modulo~$30$. 

So, we will consider following calculations by reduction modulo $15$. Let $\overline{P_3}=P_3\pmod {15}$, $\overline{C_3}=C\pmod {15}$ and $\overline{F}=F\pmod {15}$. We have
\begin{equation*}
	\overline{F}^{-1}=\overline{P_3}\cdot\big(\overline{C_3}\big)^{-1}=
	\begin{pmatrix}
		9 & 2 \\
		4 & 9 \\
	\end{pmatrix}\pmod {15},
\end{equation*} 
consequently,
\begin{equation*}
D=
\begin{pmatrix}
	9 & 2 \\
	4 & 9 \\
\end{pmatrix}
+15 F_0\pmod {30},
\end{equation*} 
where $F_0$ is $2\times2$ binary matrix. 

We have $D\cdot C_3 =P_3$ or
\begin{equation*}
	\overline{F}^{-1}\cdot 
	\begin{pmatrix}
		29 & 13 \\
		22 & 10 \\
	\end{pmatrix}
	+15\cdot F_0 \cdot 
	\begin{pmatrix}
		29 & 13 \\
		22 & 10 \\
	\end{pmatrix}
	=
	\begin{pmatrix}
		5 & 17 \\
		14 & 22 \\
	\end{pmatrix}\pmod {30}.
\end{equation*} 
Finally, we obtain
\begin{equation*}
	\begin{pmatrix}
		5 & 17 \\
		14 & 22 \\
	\end{pmatrix}
	=F_0 \cdot 
	\begin{pmatrix}
		15 & 15 \\
		0 & 0 \\
	\end{pmatrix}
	=
	\begin{pmatrix}
		5 & 17 \\
		14 & 22 \\
	\end{pmatrix}\pmod {30}.
\end{equation*} 
If we set $F_0=\begin{pmatrix}
	a & b \\
	c & d \\
\end{pmatrix}$ then it is clear that only the values $a=c=0$ and only the values $b=1$, $d=0$ give us the answer {\tt GOODLUCKFORWIN!!}. 

Best solutions for this problem were sent by Pieter Senden (Belgium), 
and by Sergey Zabolotskiy (Russia). 

\subsection{Problem ``Reversing a gate''}

\subsubsection{Formulation}
\hypertarget{pr-reversingagate}{}

	Daniel continues to study quantum circuits. A controlled NOT (CNOT) gate is the most complex quantum gate from the universal set of gates required for quantum computation. This gate acts on two qubits and makes the following transformation:
	\begin{equation*}
		\ket{00}\rightarrow\ket{00}, \quad \ket{01}\rightarrow\ket{01}, \quad \ket{10}\rightarrow\ket{11}, \quad \ket{11}\rightarrow\ket{10}.
	\end{equation*}
	This gate is clearly asymmetric. The first qubit is considered as control one, and the second is as a target one. CNOT is described by the following quantum circuit ($x,y\in\mathbb{F}_2$):
	\begin{center}
		\begin{quantikz}[ampersand replacement=\&]
			\lstick{$\ket{x}$} \& \ctrl{1} \& \qw \rstick{$\ket{x}$} \\
			\lstick{$\ket{y}$} \& \targ{} \& \qw \rstick{$\ket{y\oplus x}$}
		\end{quantikz}
	\end{center}
	{\bf The problem.} Help Daniel to design a circuit in a special way that reverses CNOT~gate:
		\begin{center}
		\begin{quantikz}[ampersand replacement=\&]
			\lstick{$\ket{x}$} \& \targ{} \& \qw \rstick{$\ket{x\oplus y}$} \\
			\lstick{$\ket{y}$} \& \ctrl{-1} \& \qw \rstick{$\ket{y}$}
		\end{quantikz}
	\end{center}
	It makes the following procedure: $\ket{00}\rightarrow\ket{00},~\ket{01}\rightarrow\ket{11},~\ket{10}\rightarrow\ket{10},~\ket{11}\rightarrow\ket{01}$.	
	To do this you should modify the original CNOT gate without re-ordering the qubits but via adding some single-qubit gates instead from the following ones:
	\savebox{\boxX}{\footnotesize \begin{quantikz}[ampersand replacement=\&]
			\lstick{$\ket{x}$} \& \gate{X} \& \qw \rstick{$\ket{x\oplus1}$}
	\end{quantikz} }
	\savebox{\boxZ}{\footnotesize \begin{quantikz}[ampersand replacement=\&]
			\lstick{$\ket{x}$} \& \gate{Z} \& \qw \rstick{$(-1)^x\ket{x}$}
	\end{quantikz} }
	\savebox{\boxH}{\footnotesize \begin{quantikz}[ampersand replacement=\&]
			\lstick{$\ket{x}$} \& \gate{H} \& \qw \rstick{$\frac{\ket{0}+(-1)^x\ket{1}}{\sqrt{2}}$}
	\end{quantikz}}
	\begin{center}
		{\footnotesize
			\begin{tabular}{|p{2.4cm}|l|p{5cm}|}
				\hline
				Pauli-X gate & \usebox\boxX & acts on a single qubit in the state $\ket{x}$, $x\in\{0,1\}$\\
				\hline
				Pauli-Z gate & \usebox\boxZ & acts on a single qubit in the state $\ket{x}$, $x\in\{0,1\}$\\
				\hline
				Hadamard gate & \usebox\boxH & acts on a single qubit in the state $\ket{x}$, $x\in\{0,1\}$\\
				\hline
			\end{tabular}
		}
	\end{center}
	
	\bigskip
	{\small
		\noindent{\bf Remark.} Let us briefly formulate the key points of quantum circuits.
		A qubit is a two-level quantum mechanical system whose state $\ket{\psi}$ is the superposition of basis quantum states $\ket{0}$ and $\ket{1}$. The superposition is written as $\ket{\psi}=\alpha_0\ket{0}+\alpha_1\ket{1}$, where $\alpha_0$ and $\alpha_1$ are complex numbers, called amplitudes, that possess $\left|\alpha_0\right|^2+\left|\alpha_1\right|^2=1$. The amplitudes $\alpha_0$ and $\alpha_1$ have the following physical meaning: after the measurement of a qubit which has the state $\ket{\psi}$, it will be observed in the state $\ket{0}$ with probability $\left|\alpha_0\right|^2$ and in the state $\ket{1}$ with probability $\left|\alpha_1\right|^2$.
			In order to operate with multi-qubit systems, we consider the bilinear operation $\otimes:\ket{x},\ket{y}\rightarrow\ket{x}\otimes\ket{y}$ on $x,y\in\{0,1\}$ which is defined on pairs $\ket{x},\ket{y}$, and by bilinearity is expanded on the space of all linear combinations of $\ket{0}$ and $\ket{1}$. When we have two qubits in states $\ket{\psi}$ and $\ket{\varphi}$ correspondingly, the state of the whole system of these two qubits is
		$
		\ket{\psi}\otimes\ket{\varphi}.
		$
		In general, for two qubits we have
		$
		 \ket{\psi}=\alpha_{00}{\ket{0}\otimes\ket{0}}+\alpha_{01}\ket{0}\otimes\ket{1}+\alpha_{10}\ket{1}\otimes\ket{0}+\alpha_{11}\ket{1}\otimes\ket{1}.
		$
		The physical meaning of complex numbers $\alpha_{ij}$ is the same as for one qubit, so we have the essential restriction $|\alpha_{00}|^2+|\alpha_{01}|^2+|\alpha_{10}|^2+|\alpha_{11}|^2=1$. We use more brief notation $\ket{a}\otimes\ket{b}\equiv\ket{ab}$.
		In order to verify your circuits, you can use different quantum circuit simulators, for example, see \cite{quirk}.
	}

\subsubsection{Solution}

The desired circuit has the following form for any~$x,y\in\mathbb{F}_2$:
\begin{center}
	\begin{quantikz}[slice all,remove end slices=0,slice titles=\ket{\psi_{\col}},slice style=blue,ampersand replacement=\&]
		\lstick{\ket{x}} \& \gate{H} \& \ctrl{1} \& \gate{H} \& \qw \rstick{\ket{x\oplus y}} \\
		\lstick{\ket{y}} \& \gate{H} \& \targ{} \& \gate{H} \& \qw \rstick{\ket{y}}
	\end{quantikz}
\end{center}

Indeed, with initial state~$\ket{x}\ket{y}$ we have
\begin{align*}
	\ket{\psi_1}&=
	\left(\frac{\ket{0}+(-1)^x\ket{1}}{\sqrt{2}}\right)\left(\frac{\ket{0}+(-1)^y\ket{1}}{\sqrt{2}}\right)\\
	&=\frac{\ket{00}+(-1)^y\ket{01}+(-1)^x\ket{10}+(-1)^{x\oplus y}\ket{11}}{2},\\
	\ket{\psi_2}&=
	\frac{\ket{00}+(-1)^y\ket{01}+(-1)^x\ket{11}+(-1)^{x\oplus y}\ket{10}}{2}\\
	&=\left(\frac{\ket{0}+(-1)^{x\oplus y}\ket{1}}{\sqrt{2}}\right)\left(\frac{\ket{0}+(-1)^y\ket{1}}{\sqrt{2}}\right),\\
	\ket{\psi_3}&=\ket{x\oplus y}\ket{y}.
\end{align*}

Best solutions were sent by Daniel Popescu (Romania), by Yo Iida (Japan) and by David Marton (Hungary).

\subsection{Problem ``Bob's symbol''}

\subsubsection{Formulation}
\hypertarget{pr-bob}{}

Bob learned the Goldwasser--Micali cryptosystem at university. Now, he is thinking about  functions over finite fields that are similar to Jacobi symbol.

He chose a function $B_n: \mathbb{F}_{2^n} \to \mathbb{F}_2$ (Bob's symbol) defined as follows for any $a \in \mathbb{F}_{2^n}$:
$$
	B_n(a) = \begin{cases}
		1, & \text{if } a = x^2 + x \text{ for some } x \in \mathbb{F}_{2^n},\\
		0, & \text{otherwise}.
	\end{cases}
$$

Bob knows that finite fields may have some subfields. Indeed, it is well known that $\mathbb{F}_{2^k}$ is a subfield of $\mathbb{F}_{2^n}$ if and only if $k \mid n$. Bob wants to exclude the elements of subfields. In other words, he considers the restriction of $B_n$ to the set
$$
	\widehat{\mathbb{F}}_{2^n} = \mathbb{F}_{2^n} \setminus \bigcup_{k \mid n,\, k \neq n} \mathbb{F}_{2^k}.
$$
Here, by $\mathbb{F}_{2^n} \setminus \mathbb{F}_{2^k}$ we mean the removal from $\mathbb{F}_{2^n}$ the elements forming the field of order~$2^k$.

Finally, Bob is interested in the sets $$B_n^0 = \{ y \in \widehat{\mathbb{F}}_{2^n} : B_n(y) = 0\} \text{~~~and~~~} B_n^1 = \{ y \in \widehat{\mathbb{F}}_{2^n} : B_n(y) = 1\}.$$

\begin{enumerate}
	\item[{\bf Q1}] Help Bob to find ${|B_n^0|}/{|B_n^1|}$ if $n$ is odd.
	\item[{\bf Q2}] Help Bob to find $|B_n^0|$ and $|B_n^1|$ for an arbitrary $n$.
\end{enumerate}

\subsubsection{Solution}

Let us define
\begin{align*}
	\fdn{2^n} &= \{ x \in \fd{2^n} : 
	\bobsym{x}{n} = 0 \}, \text{ i.e. } \bobset{n} = \fdo{2^n} \cap \fdn{2^n}.
\end{align*}
First we prove the following lemma.
\begin{lemma}\label{lemma::p2}
    Let $k \mid n$. Then 
    $$
        |\fd{2^k} \cap \fdn{2^n}|  = 
        \begin{cases}
            \frac{1}{2}|\fd{2^k}|, & \text{if } n/k\text{ is odd},\\
            0, & \text{otherwise}.
        \end{cases}
    $$
\end{lemma}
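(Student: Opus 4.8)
The plan is to reduce everything to the absolute trace map $\mathrm{Tr}\colon\fd{2^n}\to\fd{2}$, $\mathrm{Tr}(a)=\sum_{i=0}^{n-1}a^{2^i}$. First I would recall the classical Artin--Schreier fact that $x^2+x=a$ is solvable with $x\in\fd{2^n}$ if and only if $\mathrm{Tr}(a)=0$: the map $x\mapsto x^2+x$ is $\fd{2}$-linear with kernel $\fd{2}=\{0,1\}$, so its image is a subspace of dimension $n-1$, and this image coincides with $\ker\mathrm{Tr}$ since $\mathrm{Tr}(x^2+x)=\mathrm{Tr}(x^2)+\mathrm{Tr}(x)=0$ for every $x$ (using $\mathrm{Tr}(x^2)=\mathrm{Tr}(x)$) while $\mathrm{Tr}$ is onto. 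Hence $\bobsym{a}{n}=0$ exactly when $\mathrm{Tr}(a)=1$, i.e. $\fdn{2^n}=\{a\in\fd{2^n}:\mathrm{Tr}(a)=1\}$, and the quantity to be counted is $|\{a\in\fd{2^k}:\mathrm{Tr}(a)=1\}|$.

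The second step is to evaluate this absolute trace on the subfield via transitivity, $\mathrm{Tr}_{\fd{2^n}/\fd{2}}=\mathrm{Tr}_{\fd{2^k}/\fd{2}}\circ\mathrm{Tr}_{\fd{2^n}/\fd{2^k}}$. For $a\in\fd{2^k}$ every conjugate $a^{2^{ki}}$ equals $a$, so the relative trace is $\mathrm{Tr}_{\fd{2^n}/\fd{2^k}}(a)=\sum_{i=0}^{n/k-1}a^{2^{ki}}=(n/k)\,a$, which in characteristic $2$ equals $a$ when $n/k$ is odd and $0$ when $n/k$ is even. Substituting into the transitivity formula shows that, for $a\in\fd{2^k}$, the absolute trace $\mathrm{Tr}(a)$ equals $\mathrm{Tr}_{\fd{2^k}/\fd{2}}(a)$ if $n/k$ is odd, and equals $\mathrm{Tr}_{\fd{2^k}/\fd{2}}(0)=0$ if $n/k$ is even.

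Finally I would read off the two cases. If $n/k$ is even, then $\mathrm{Tr}(a)=0$ for every $a\in\fd{2^k}$, so no element of $\fd{2^k}$ can lie in $\fdn{2^n}$ and the intersection is empty. If $n/k$ is odd, then $a\in\fd{2^k}$ lies in $\fdn{2^n}$ iff $\mathrm{Tr}_{\fd{2^k}/\fd{2}}(a)=1$; as the trace $\fd{2^k}\to\fd{2}$ is a surjective $\fd{2}$-linear form, its two fibers are equinumerous, so exactly $\frac{1}{2}\cdot 2^k=\frac{1}{2}|\fd{2^k}|$ elements qualify. This is precisely the asserted formula.

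The whole argument is elementary once the trace characterization of $\bobsym{\cdot}{n}=0$ is established; the only point requiring genuine care is the reduction of the factor $n/k$ modulo $2$ in the relative trace, as this is exactly where the parity of $n/k$ governs the dichotomy. I do not anticipate any serious obstacle beyond correctly invoking trace transitivity.
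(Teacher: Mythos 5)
Your proof is correct, but it takes a genuinely different route from the paper's. The paper works directly with the map $G(x)=x^2+x$ on $\fd{2^k}$: since $G$ is two-to-one, exactly $2^{k-1}$ elements $a \in \fd{2^k}$ are not of the form $x^2+x$ with $x \in \fd{2^k}$; for each such $a$ the polynomial $x^2+x+a$ is irreducible over $\fd{2^k}$, so its roots lie in $\fd{2^{2k}}\setminus\fd{2^k}$, and the dichotomy is then read off the subfield lattice --- if $n/k$ is even then $\fd{2^{2k}}\subseteq\fd{2^n}$, so every such $a$ becomes a value of $x^2+x$ over $\fd{2^n}$ and the intersection is empty, while if $n/k$ is odd then $\fd{2^{2k}}\cap\fd{2^n}=\fd{2^k}$, so no root lands in $\fd{2^n}$ and all $2^{k-1}$ elements survive. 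You instead linearize the problem via Artin--Schreier: membership in $\fdn{2^n}$ becomes the linear condition $\mathrm{Tr}_{\fd{2^n}/\fd{2}}(a)=1$, and the parity of $n/k$ enters through the relative trace $\mathrm{Tr}_{\fd{2^n}/\fd{2^k}}(a)=(n/k)\,a$ on the subfield, after which the count is just the size of a fiber of a surjective $\fd{2}$-linear form. Both arguments are complete; yours is arguably slicker once the trace characterization is in hand and generalizes directly to $x^p-x=a$ in characteristic $p$, whereas the paper's proof is self-contained, using only elementary facts about quadratics and which quadratic extensions embed in $\fd{2^n}$, with no trace machinery required.
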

\begin{proof} 
	Let us consider the function $G(x) = x^2 + x = x(x + 1)$, where $x \in \fd{2^k}$. First, $G(x) = G(x + 1)$. Secondly, $x^2 + x + a$, $a \in \fd{2^k}$, has at most $2$ roots. It means that $G$ is a two-to-one function. Therefore, there are exactly $2^{k - 1}$ distinct $a$ such that $x^2 + x \neq a$ for any $x \in \fd{2^k}$.	
	
	Next, for any such $a$ the polynomial $x^2 + x + a$ is irreducible over $\fd{2^k}$. It means that it has a root $q$ in the quadratic extension $\fd{2^{2k}}$ of $\fd{2^k}$, i.e. $a = q^2 + q$. If $n/k$ is even, $\fd{2^{2k}}$ is a subfield of $\fd{2^n}$, i.e. $q \in \fd{2^n}$. Thus, $|\fd{2^k} \cap \fdn{2^n}| = 0$. If $n/k$ is odd, then $\fd{2^{2k}}$ is not a subfield of $\fd{2^n}$. Moreover, $\fd{2^{2k}} \cap \fd{2^n} = \fd{2^{k}}$. It means that any root $q$ does not belong to $\fd{2^n}$, i.e. $|\fd{2^k} \cap \fdn{2^n}| = 2^{k - 1}$.    	    
\end{proof}

Now we are ready to answer the questions. Let $n = m 2^t$, where $m$ is odd. We define 
\begin{equation*}
	f_t(d) = |\fdo{2^{d2^t}} \cap \fdn{2^n}| \text{ and } g_t(d) = \frac{1}{2}2^{d 2^t},
\end{equation*}
where $d \mid m$. This means that $|\bobset{n}| = |\bobset{m 2^t}| = f_t(m)$. At the same time, the definition of $\fdo{2^n}$ gives us that
$$
	\sum_{d \mid n}	|\fdo{2^{d}} \cap \fdn{2^n}| = |\fd{2^{n}} \cap \fdn{2^n}|.
$$
According to Lemma~\ref{lemma::p2} and the denotations above,
\begin{equation*}
	\begin{gathered}
		\sum_{d \mid n}	|\fdo{2^{d}} \cap \fdn{2^n}| = \sum_{d \mid m}|\fdo{2^{d2^t}} \cap \fdn{2^n}| = \sum_{d \mid m} f_t(d) \text{ and }\\
		|\fd{2^{n}} \cap \fdn{2^n}| = |\fd{2^{m 2^t}} \cap \fdn{2^n}| = \frac{1}{2} |\fd{2^{m 2^t}}| =  g_t(m).
	\end{gathered}
\end{equation*}
Hence,
$$
	g_t(m) = \sum_{d \mid m} f_t(d) \text{ holds for any integers } m \geq 1 \text{ and } t \geq 0.
$$
According to the Möbius inversion formula,
$$
	f_t(m) = \sum_{d \mid m} \mu(d) g_t(m/d) = \frac{1}{2} \sum_{d \mid m} \mu(d) 2^{(m / d) 2^t}.
$$
Recall that $\mu(d) = 0$ if $d$ is not square-free (there is an integer $u \geq 2$ such that $u^2 \mid d$); otherwise, it is equal to $1$ ($-1$ resp.) if $d$ has an even (odd resp.) number of prime factors. As a result,
$$
	|\bobset{n}| =  \frac{1}{2} \sum_{d \mid m} \mu(d) 2^{n / d}.
$$

Also, $|\bobseta{n}| = |\fdo{2^n}| - |\bobset{n}|$. We need only to note that 
    $$
        |\fdo{2^n}| = \sum_{d \mid n} \mu(d) 2^{n/d}.
    $$
This can be easily proven just using
$$
	2^n = |\fd{2^n}| = \sum_{d \mid n} |\fdo{2^d}|
$$
together with the Möbius inversion formula. 
Finally, we can see that $|\bobset{n}| = |\bobseta{n}| = \frac{1}{2}|\fdo{2^n}|$ for odd $n$, which means that the answer for Q1 is $1$. In fact, it directly follows from  Lemma~\ref{lemma::p2} and the definition of $\fdo{2^n}$. 

Many teams provided the correct answers in the second round using similar ideas: Himanshu Sheoran, Gyumin Roh, Yo Iida (India), Mikhail Kudinov, Denis Nabokov, Alexey Zelenetskiy (Russia), Stepan Davydov, Anastasiia Chichaeva, Kirill Tsaregorodtsev (Russia), Mikhail Borodin, Vitaly Kiryukhin, Andrey Rybkin (Russia), Kristina Geut, Sergey Titov, Dmitry Ananichev (Russia), Pham Minh, Dung Truong Viet (Vietnam) and Alexander Belov (Russia).

\subsection{Problem ``Public keys for e-coins''}

\subsubsection{Formulation}
\hypertarget{pr-publickeys}{}

Alice has $n$ electronic coins that she would like to spend via some public service $S$ (bank).
The service applies some asymmetric algorithm of encryption $E(,)$ and decryption $D(,)$ in its work. Namely, for the pair
of public and private keys $(PK, SK)$ and for any message $m$ it holds: if $c = E(m,PK)$, then $m=D(c,SK)$ and
visa versa: if $c'=E(m,SK)$, then $m=D(c',PK)$.

To spend her money, Alice generates a sequence of public and private key pairs
$(PK_1, SK_1), \ldots,$ $(PK_n, SK_n)$
and sends the sequence of public keys $PK_1, \ldots ,PK_n$ to the service $S$.
By this she authorizes the service $S$ to control her $n$ coins.

If Alice would like to spend a coin with number $i$ in the shop of Bob, she just gives
the secret key $SK_i$ to Bob and informs him about the number $i$. To get the coin with number~$i$, Bob sends to the service $S$ three
parameters: number $i$, some non secret message $m$, and its electronic signature $c' = E(m, SK_i)$. The service $S$ checks whether the signature $c'$ corresponds to the message $m$, i.e. does it hold the equality
$m = D(c', PK_i)$. If it is so, the service accepts the signature, gives the coin number $i$ to Bob and marks it as <<spent>>.

\underline{\color{blue}{\bf Problem for a special prize!}} Propose a {\it modification of this scheme} related to generation of public and private key pairs. Namely, is it possible for Alice not to send the sequence of public keys $PK_1, \ldots PK_n$ to the service $S$, but send only some initial information enough for generating all necessary public keys on the service's side?
Suppose that Alice sends to the service $S$ only some initial key $PK$ (denote it also as $PK_0$), some function $f$ and a set of parameters $T$ such that $PK_{i+1} = f(PK_i, T)$ for all $i\geqslant 0$.
Propose your variant of this function $f$ and the set $T$. Think also what asymmetric cryptosystem it is possible to use in such scheme.

\bigskip
\noindent{\bf Requirements to the solution.} 
Knowing $PK$, $f$ and $T$, it is impossible to find any private key $SK_i$, where $i=1, \ldots ,n$. It should be impossible to recover $SK_i$ even if the secret keys $SK_1,\ldots,SK_{i-1}$ are also known, or even if all other secret keys are known (more strong condition).

\subsubsection{Solution} The problem was solved by two teams and partially solved by three teams.

One of the best partial solutions was proposed by the team of Viet-Sang Nguyen, Nhat Linh Le Tan and Phuong Hoa Nguyen from France. It is based on principles of Elliptic-curves-cryptography and hash functions. The main idea is to consider $SK_i$ as the sequence of numbers related to each other with the help of HMAC-SHA256. Public keys can be easily generated by the server $S$. The main disadvantage of the scheme is described by the authors: server $S$ should keep the point $PK_0$ in secret, as well as Bob should do with $SK_i$. The problem is that if there is some data leakage, then all coins of Alice will be lost. So, the potential complicity of the server and Bob forms a crucial danger for Alice. 

An interesting idea was proposed by Himanshu Sheoran (India) Gyumin Roh (South Korea) and Yo Iida (Japan). It is based on the combination of two pairs of RSA keys. With one pair it is proposed to sign messages from Bob to the server, with another one Alice generates her private keys to give them to Bob. Solution was accepred as partial since the security of this scheme should be considered in more details. 

A very nice partial solution was proposed by Robin Jadoul (Belgium), Esrever Yu (Taiwan), Jack Pope (United Kingdom). The authors describe an identity-based signature scheme with message recovery based on the RSA hardness assumption. The main idea is to generate public and private keys from the corresponding master keys by application of cryptographic hash functions (four functions are used).

An original attempt to solve the problem was proposed by Alexander Bakharev, Rinchin Zapanov and Denis Bykov (Russia). They applied RSA-like technique and considered private keys as $SK_i = PK_i^{-1} \mod \phi(n)$, where $n=pq$ and prime numbers $p$, $q$ are known to Alice only, as well as $\phi(n)$. Public keys are formed as the consecutive prime numbers: $PK_{i+1}$ is the next prime number after $PK_i$. But the security of this scheme is still under the question since public keys are too connected; it should be analyzed. 

We have accepted two complete solutions.

One of them was proposed by the team of G.Teseleanu, P.Cotan, L.Constantin-Sebastian from the Institute of Mathematics of the Romanian Academy. On the first round the partial solution was proposed by G.Teseleanu. RSA-like technique is applied in the solution. Private and public keys are connected as $(SK_i)^2 = PK_i \mod N$, while public keys are generated via some PRNG from the fixed master key $K$ and number $i$. Only Alice can produce private keys since she knows prime factors $p$ and $q$, where $N=pq$. 

Another accepted solution was proposed by Ivan Ioganson, Zhan-Mishel Dakuo and Andrei Golovanov from Saint Petersburg ITMO University (Russia). Ideas of ID-based signature scheme are used in it. Public and private keys are generated from the corresponding master keys $PK_0$ and $SK_0$. The principles of Diffie-Hellman protocol on finite groups are applied. Namely, private keys are generated as $SK_i=SK_0 * H(i)$, whereas public keys used by the server are combinations of  $PK_0 = SK_0 * P$ and numbers $i$, where $P$ is a generator element of the group. It is hard to recover $SK_i$ by information on the server and from $SK_1,\ldots,SK_{i-1}, SK_{i+1},\ldots, SK_n$ if the hash function $h$ is of a good cryptographic quality.

\subsection{Problem ``CP Problem''}

\subsubsection{Formulation}
\hypertarget{pr-cpproblem}{}

	Let $\mathbb{G} = \langle g \rangle$ be a group of prime order $q$, $\kappa$ is the bit length of~$q$. Let us consider two known modifications of the discrete logarithm problem over $\mathbb{G}$, namely, $s$-DLOG problem and $\ell$-OMDL problem. Both of them are believed to be difficult.

\bigskip
	
\noindent \textbf{$s$-DLOG problem} (with parameter $s \in \mathbb{N}$)
	
\begin{tabular}{ll}	
		\underline{Unknown values}:    & $x$ is chosen uniformly at random from $\mathbb{Z}_q^*$.\\
		
		\underline{Known values}:      & $g^x, g^{x^2}, \ldots, g^{x^s}$.\\
		
		\underline{Access to oracles}: & no.\\
		
		\underline{The task}:          & to find $x$.\\
\end{tabular}
	
	~\\
	
\noindent 	\textbf{$\ell$-OMDL (One-More Discrete Log) problem} (with parameter $\ell \in \mathbb{N}$)

\begin{tabular}{lp{12cm}}		
		\underline{Unknown values}:  & $x_1, x_2, \ldots, x_{\ell +1}$ are chosen uniformly at random from $\mathbb{Z}_q^*$.\\
		
		\underline{Known values}:    & $g^{x_1}, g^{x_2}, \ldots, g^{x_{\ell +1}}$.\\
		
		\underline{Access to oracles}:& at most $\ell$ queries to $O_1$ that on input $y \in \mathbb{G}$ returns $x$ \\
		 & such that $g^x = y$. \\
		\underline{The task}:         & to find $x_1, x_2, \ldots, x_{\ell +1}$.\\
\end{tabular}

		~\\

	Consider another one problem that is close to the $s$-DLOG and $\ell$-OMDL problems:
		~\\

\noindent 	\textbf{$(k, t)$-CP (Chaum---Pedersen) problem} (with parameters $k, t \in \mathbb{N}$)

\begin{tabular}{lp{13.5cm}}		
	\underline{Unknown values}: & $x_1, x_2, \ldots, x_{t +1}$ are chosen uniformly at random from $\mathbb{Z}_q^*$.\\
	
	\underline{Known values}: & $g^{x_1}, g^{x_2}, \ldots, g^{x_{t +1}}$.\\
	
	\underline{Access to oracles}: & at most $k$ queries to $O_1$ that on input $(i, z) \in \{1, \ldots, t+1\}\times \mathbb{G}$ \\
	 &  returns $z^{x_i}$, and at most $t$ queries to $O_2$ that on input \\
    &  $(\alpha_1, \ldots, \alpha_{t+1}) \in  \mathbb{Z}_q^{t+1}$ returns ${\alpha_1 x_1 + \ldots + \alpha_{t+1} x_{t+1}}$. \\
	\underline{The task}: & to find $x_1, x_2, \ldots, x_{t+1}$.\\
\end{tabular}	

	~\\
	
	It is easy to see that if there exists a polynomial (by $\kappa$) algorithm that solves the $s$-DLOG problem, then there exists a polynomial algorithm that solves the $(s-1, t)$-CP problem for any $t \in \mathbb{N}$.

	\underline{\color{blue}{\bf Problem for a special prize!}} Prove or disprove the following conjecture: if there exists a polynomial algorithm that solves $(k, t)$-CP problem, then there exists a polynomial algorithm that solves at least one of the $s$-DLOG and $\ell$-OMDL problems, where $k, t, s, \ell$ are upper bounded by polynomial of $\kappa$.

\subsubsection{Solution}

Unfortunately, there were no any advances on solving this problem among participants, so, this conjecture is still open.

\subsection{Problem ``Interpolation with Errors''}

\subsubsection{Formulation}
\hypertarget{pr-interpolation}{}

\newcommand\F{\mathbb{F}}
\newcommand\Z{\mathbb{Z}}
\newcommand\Zn{\Z_n}
\renewcommand\d{16}
\newcommand\dm{15}

\newcommand\m{324}
\newcommand\mm{125}
\newcommand\mmm{109}
\newcommand\me{90}
\newcommand\err{35}

Let $n = 2022$ and let $\Zn$ be the ring of integers modulo $n$. Given $x_i, y_i \in \Zn$ for $i \in \{1,\ldots,\m\}$, find monic polynomials
\begin{align*}
f(x) &= x^{\d} + \alpha_{\dm} x^{\dm} + \ldots + \alpha_1 x + \alpha_0,\\
g(x) &= x^{\d} + \beta_{\dm} x^{\dm} + \ldots + \beta_1 x + \beta_0
\end{align*} of degree $d=\d$ and coefficients from $\Zn$ such that the relation \[
y_i
= \frac%
{f(x_i)}{g(x_i)}
= \frac%
{x_i^{\d} + \alpha_{\dm} x_i^{\dm} + \ldots + \alpha_1 x_i + \alpha_0}%
{x_i^{\d} + \beta_{\dm} x_i^{\dm} + \ldots + \beta_1 x_i + \beta_0}
\]
holds for at least $\me$ of the indices $i \in \{1,\ldots,\m\}$.

\bigskip

\noindent \textbf{Note.} The coefficients $\beta_0, \ldots, \beta_{\dm}$ are such that the denominator of the above fraction is invertible for all possible values of $x_i \in \Zn$. It can be assumed that they are sampled uniformly at random from all such sets of values. Furthermore, the positions and error values can be also assumed to be sampled uniformly at random.

\bigskip


 The attachment (see \cite{INTWITHERR}) contains a CSV file with $\m$ triplets $(i, x_i, y_i)$.

\subsubsection{Solution}

First, note that $n = 2022 = 2\cdot 3 \cdot 337$. Therefore, the problem can be solved for moduli $2,3,337$ independently, and then recovered using the Chinese Remainder Theorem (CRT). Furthermore, for moduli 2 and 3, there are only a few possible polynomials (modulo the relations $x^2=x$ modulo 2 and $x^3=x$ modulo 3). The best candidate polynomial modulo 6 (ignoring equivalent forms) satisfies $\mm$ / $\m$ values $x_i, y_i$, while the next best one does only $\mmm$ / $\m$. Note that the expected value is $\me + (\m-\me)/6 = 129$ ($\me$ correct ones and 1/6 wrong pairs satisfying the relation modulo 6 by chance), so that it is safe to assume that the best one is correct. We can now consider the problem modulo 337, where we know that the $\me$ correct pairs must be among those $\mm$ correct pairs observed modulo 6. Denote the set of those $\mm$ remaining indices by $I$.

Note that the relation can be rewritten as \[
y_i\cdot g(x_i) - f(x_i) = 0,
\]
or, more explicitly, 
\begin{align}
\label{eq:rel3}
\Big( y_i\cdot \sum_{j=0}^{\dm} \beta_i x_i^j \Big) - \Big(\sum_{j=0}^{\dm} \alpha_i x_i^j\Big) + \Big(y_i x_i^d - x_i^d\Big) = 0.
\end{align}
The target problem can now be formulated as the problem of decoding a linear code over the finite field $GF(337)$. Indeed, let the generator matrix $G$ be given by columns \[
	(
	-1, -x_i, -x_i^2, \ldots, -x_i^{\dm},~~
	y_i, y_i\cdot x_i, y_i\cdot x_i^2, \ldots, y_i\cdot x_i^{\dm})
\] 
for all chosen indexes $i \in I$,
let the target vector $v$ be given by \[
	v = (y_i x_i^d - x_i^d)_{i \in I},
\]
and consider the ``solution'' vector \[
	s = (\alpha_0, \ldots, \alpha_{\dm}, \beta_0, \ldots, \beta_{\dm}).
\]
It easy easy to verify that the codeword $s\times G$ differs from $-v$ in at most $\mm-\me$ places, i.e., has at most $\err$ errors. Indeed, the vector $s\times G$ compute the contribution of the first two clauses of Equation~\eqref{eq:rel3}, whereas $v$ defines the third clause, and the three clauses sum to zero on correct data pairs.
Note that $G$ defines a $[\mm, 32]$ code, i.e., a 32-dimensional code of length $\mm$. A random such code has expected minimum distance about 82 (given by the Gilbert-Varshamov bound), so that the solution (with the error $\err$ less than half of the distance) should likely be unique (modulo 337).

A very basic yet efficient method for linear code decoding is the so-called ``pooled Gauss'' method: choosing $k=32$ random coordinates of the code and assuming that they are error-free, allowing to recover full codeword by solving a linear system. Alternatively, SageMath includes an implementation of the Lee-Brickell method, which is slightly faster. The decoding should take less than 30 minutes using the basic method.

\textbf{Remark:} due to equivalent polynomial fractions modulo 2 and modulo 3, the overall solution is not unique (but there are only a few candidates).

\subsection{Problem ``HAS01''}

\subsubsection{Formulation}
\hypertarget{pr-HAS01}{}

Bob is a beginner cryptographer. He read an article about the new hash function HAS01 (see a description in \cite{HAS01}).
Bob decided to implement the HAS01 function in order to use it for checking the integrity of messages being forwarded. However, he was inattentive and made a mistake during the implementation. In the function $f_1$, he did not notice the sign <<'>> in the variable  $a$ and used the following set of formulas:

\smallskip
\begin{algorithmic}
  \FOR {$i = 0 \text{ to } 7$}
  \FOR {$j = 0 \text{ to } 6$}
  \STATE $a_{(i+1)\bmod 8,j} \leftarrow \text{SBox}(((a_{i,j}\oplus a_{(i+1)\bmod 8,j}) \ll 3) \oplus ((a_{i,j+1} \oplus a_{(i+1)\bmod 8,j+1}) \gg 5))$
  \ENDFOR
  \STATE $a_{(i+1)\bmod 8,7} \leftarrow \text{SBox}(((a_{i,7}\oplus a_{(i+1)\bmod 8,7}) \ll 3) \oplus ((a_{i,0} \oplus a_{(i+1)\bmod 8,0}) \gg 5) \oplus 7)$
    \ENDFOR
\end{algorithmic}

\begin{itemize}
\item[{\bf Q1}]
 Prove that Bob's version of the hash function is cryptographically weak.
\item[{\bf Q2}]
 Find a collision to the following message (given in hexadecimal format): \\
 {\tt 316520393820336220323620343720316320373820386520}.
\end{itemize}

\smallskip
The test set value for the original HAS01 hash function is given in \cite{HAS01_1}.

The test set value for Bob's implementation is given in \cite{HAS01_2}.

\subsubsection{Solution}

{\bf Q1} In the case where Bob makes a mistake and uses formulas with recursion, it turns out that for each first byte of the string (a00, a10, a20, a30, a40, a50, a60, a70), the most significant three bits do not affect the formation of the digest. Therefore, the function is not collision resistant, making it easy to pick up a number of different values that produce the same hash value.

{\bf Q2} According to the formulas, the most significant three bits for the first byte of each string do not affect the formation of the hash value. However, the original message fills only the first three rows of the original matrix. Therefore, changing the upper three bits in bytes a00, a10, a20 will allow you to get the same hash values. Hence, for a given value {\tt 316520393820336220323620343720316320373820386520}, you can get $2^9-1=511$ collisions.

For example:

{\tt 316520393820336220323620343720316320373820386520;}

{\tt F16520393820336220323620343720316320373820386520;}

{\tt F165203938203362E0323620343720316320373820386520;}

{\tt 31652039382033622032362034372031E320373820386520;}

and so on.

It should be noted, that most of those participants who tried to solve this problem were able to get the correct answer and determine the collision. Separately, it is worth noting that the team of Mikhail Borodin, Vitaly Kiryukhin and Andrey Rybkin (Russia) not only answered the questions of the task correctly, but also considered the issues of a possible vulnerability for the HAS01-512 algorithm.

\subsection{Problem ``Weaknesses of the PHIGFS''}

\subsubsection{Formulation}
\hypertarget{pr-PHIGFS}{}

A young cryptographer Philip designs a family of lightweight block ciphers based on a 4-line type-2 Generalized Feistel scheme (GFS) with better diffusion effect.

Its block is divided into four $m$-bit subblocks, $m \geqslant 1$. For better diffusion effect, Philip decides to use a $(4 \times 4)$-matrix $A$ over $\mathbb{F}_{2^m}$ instead of a standard subblocks shift register in each round. The family ${\rm{PHIGFS}}_{\ell}(A,b)$ is parameterized by a non-linear permutation $b\colon{\mathbb{F}_{2^m}} \to {\mathbb{F}_{2^m}}$, the matrix $A$ and the number of rounds $\ell \geqslant 1$. The one-round keyed transformation of ${\rm{PHIGFS}}_{\ell}(A,b)$ is a permutation ${g_k}$ on $\mathbb{F}_{2^m}^4$ defined as:
$${g_k}({x_3},{x_2},{x_1},{x_0}) = A\cdot{\left( {{x_3},{x_2} \oplus b({x_3} \oplus {k_1}),{x_1},{x_0} \oplus b({x_1} \oplus {k_0})} \right)^T},$$
where ${x_0},{x_1},{x_2},{x_3} \in {\mathbb{F}_{2^m}}$, $k = ({k_1},{k_0})$ is a $2m$-bit round key, ${k_0},{k_1} \in {\mathbb{F}_{2^m}}$.

The $\ell$-round encryption function $f_{{k^{(1)},\ldots,k^{(\ell)}} }\colon \mathbb{F}_{2^m}^4 \to \mathbb{F}_{2^m}^4$ under a key $({k^{(1)}},\ldots,{k^{(\ell)}}) \in \mathbb{F}_{2^m}^\ell$ is given by
$${f_{ {{k^{(1)}},\ldots,{k^{(\ell)}}} }}({\bf{x}}) = g_{k^{(\ell)}} \ldots g_{k^{(1)}}({\bf{x}}) \text{ for all } {\bf{x}} \in {\mathbb{F}}_{2^m}^4.$$

For effective implementation and security, Philip chooses two binary matrices ${A'}, {A''}$ with the maximum branch number among all binary matrices of size 4, where
$${A'} = \left( {\begin{array}{*{20}{c}}
1&1&0&1\\
1&0&1&1\\
0&1&1&1\\
1&1&1&0
\end{array}} \right),
\;
{A''} = \left( {\begin{array}{*{20}{c}}
0&1&1&1\\
1&1&1&0\\
1&1&0&1\\
1&0&1&1
\end{array}} \right).$$

For approval, he shows the cipher to his friend Antony who claims that ${A'},{A''}$ are bad choices because ciphers ${{\rm{PHIGFS}}_\ell}(A',b)$, ${{\rm{PHIGFS}}_\ell}(A'',b)$ are insecure against distinguisher attacks for all $b\colon {\mathbb{F}_{2^m}} \to {\mathbb{F}_{2^m}}$, $\ell \geqslant 1$.

\bigskip

Help Philip to analyze the cipher ${{\rm{PHIGFS}}_\ell}(A,b)$.
Namely, for any $b\colon{\mathbb{F}_{2^m}} \to {\mathbb{F}_{2^m}}$ and any $\ell \geqslant 1$, show that ${\rm{PHIGFS}}_{\ell}(A,b)$ has
  \begin{enumerate}[noitemsep]
    \item[{\bf (a)}]{$\ell$-round differential sets with probability 1; }
    \item[{\bf (b)}]{$\ell$-round impossible differential sets;}
  \end{enumerate}
for the following cases:
    {\bf Q1} $A = A'$;
and
  {\bf Q2} $A = A''$.
In each case, construct these nontrivial differential sets and prove the corresponding~property.

\noindent{\bf Remark.} Let us recall the following definitions.

\begin{itemize}
\item
Let $\delta ,\varepsilon  \in {\mathbb{F}_{{2^n}}}$ be fixed nonzero input and output differences. The \textit{differential probability} of $s\colon{\mathbb{F}_{2^n}} \to {\mathbb{F}_{2^n}}$ is defined as
$${p_{\delta ,\varepsilon }}(s) = {2^{ - n}} \cdot \left|{\left\{ {\alpha  \in {\mathbb{F}_{{2^n}}}|s(\alpha  \oplus \delta ) \oplus s(\alpha ) = \varepsilon } \right\}}\right|.$$

\item If $s\colon{\mathbb{F}_{{2^n}}} \times K \to {\mathbb{F}_{{2^n}}}$ depends on a key space $K$, then the \textit{differential probability} of $s$ is defined as
$${p_{\delta ,\varepsilon }}(s) = {\left| K \right|^{ - 1}}\sum\limits_{k \in K} {{p_{\delta ,\varepsilon }}({s_k})} ,$$
where $s({x},k) = {s_k}({x})$, ${x} \in {\mathbb{F}_{{2^n}}}$, $k \in K$.

\item Let $\Omega,\Delta \subseteq  {{\mathbb{F}}_{2^n}} \backslash \{0\} $ and $\Omega,\Delta$ are nonempty. If ${p_{\delta ,\varepsilon }}(s) = 0$ for any $\delta\in \Omega, \; \varepsilon \in \Delta$, then $(\Omega, \Delta)$ are \textit{impossible differential sets}. But if
$$\sum_{\delta\in \Omega, \varepsilon \in \Delta} {p_{\delta ,\varepsilon }}(s) =1, $$
then $(\Omega, \Delta)$ are \textit{differential sets with probability} 1.
We call $(\Omega, \Delta)$ trivial (impossible) differential sets if $\Omega \in \{\emptyset, {{\mathbb{F}}_{2^n}} \backslash \{0\} \} $ or $\Delta \in \{\emptyset, {{\mathbb{F}}_{2^n}} \backslash \{0\} \}$.
\end{itemize}

\subsubsection{Solution}
Let $\delta ,\varepsilon  \in {\mathbb{F}_{{2^n}}}$ be fixed nonzero input and output differences. The differential probability of $s\colon{\mathbb{F}_{2^n}} \to {\mathbb{F}_{2^n}}$ is defined as
$${p_{\delta ,\varepsilon }}(s) = {2^{ - n}} \cdot \left|{\left\{ {\alpha  \in {\mathbb{F}_{{2^n}}}|s(\alpha  \oplus \delta ) \oplus s(\alpha ) = \varepsilon } \right\}}\right|.$$
If $s\colon{\mathbb{F}_{{2^n}}} \times K \to {\mathbb{F}_{{2^n}}}$ depends on a key space $K$ then the differential probability of $s$ is defined as
$${p_{\delta ,\varepsilon }}(s) = {\left| K \right|^{ - 1}}\sum\limits_{k \in K} {{p_{\delta ,\varepsilon }}({s_k})} ,$$
where $s({x},k) = {s_k}({x})$, ${x} \in {\mathbb{F}_{{2^n}}}$, $k \in K$. In that case the pair $(\delta ,\varepsilon )$ represents a differential denoted by $\delta { \longrightarrow ^s}\varepsilon $.

For the $l$-round encryption function $f$, we will sometimes write $\delta { \longrightarrow _l}\varepsilon $ to emphasize the number of rounds $l$ instead of $\delta { \longrightarrow ^f}\varepsilon $.

For $\delta  \in {\mathbb{F}_{{2^m}}},b:{\mathbb{F}_{{2^m}}} \to {\mathbb{F}_{{2^m}}}$, we denote
$${\Delta _\delta }(b) = \left\{ {b(\alpha  \oplus \delta ) \oplus b(\alpha )|\alpha  \in {\mathbb{F}_{{2^m}}}} \right\}.$$

Note that ${g_k}$ consists of a transformation ${v_k}:\mathbb{F}_{{2^m}}^4 \to \mathbb{F}_{{2^m}}^4$ and the matrix $a$ over ${\mathbb{F}_{{2^m}}}$ , where
$${v_k}({x_3},{x_2},{x_1},{x_0}) = \left( {{x_3},{x_2} \oplus b({x_3} \oplus {k_1}),{x_1},{x_0} \oplus b({x_1} \oplus {k_0})} \right),$$
$${g_k}({\bf{x}}) = a{\left( {{v_k}({\bf{x}})} \right)^T},{\bf{x}} \in \mathbb{F}_{{2^m}}^4.$$

\textbf{Case I. $a = {a_1}$}.

Let $\varepsilon  \in {\mathbb{F}_{2^m}}$, 
$$W(\varepsilon ) = \left\{ {({\alpha _3},{\alpha _2},{\alpha _1},{\alpha _0}) \in \mathbb{F}_{{2^m}}^4|{\alpha _3} \oplus {\alpha _1} = \varepsilon } \right\}\backslash \left\{ {\left( {0,0,0,0} \right)} \right\}.$$

\textbf{Theorem} 1. Let $l$ be any positive integer, $\varepsilon  \in {\mathbb{F}_{2^m}}$. Then
$l$-round differential sets $W(\varepsilon ){\longrightarrow_{l}} W(\varepsilon )$ of the ${\rm{PHIGFS}}_{l}({a_1},b)$ hold with probability 1.

\textbf{Proof.} 

 Note that for any $({x_3},{x_2},{x_1},{x_0}) \in \mathbb{F}_{{2^m}}^4$ we have the following equality
$$a_{1}{({x_3},{x_2},{x_1},{x_0})^T} = {\left( {{x_3} \oplus {x_2} \oplus {x_0},{x_3} \oplus {x_1} \oplus {x_0},{x_2} \oplus {x_1} \oplus {x_0},{x_3} \oplus {x_2} \oplus {x_1}} \right)^T}.$$

Let us consider any nonzero $(\delta ,\lambda ,\omega ) \in \mathbb{F}_{2^m}^3$ and any round key $k \in \mathbb{F}_{{2^m}}^2$. 

Note that $v_k$ maps a difference
 $$\left( {\delta ,\lambda ,\delta  \oplus \varepsilon ,\omega } \right) \in W(\varepsilon) \text{ to a difference } \left( {\delta ,{\lambda ^{(1)}},\delta  \oplus \varepsilon ,{\omega ^{(1)}}} \right) \in W(\varepsilon) $$
 for any
 $${\lambda ^{(1)}} \in {\Delta _\delta }(b) \oplus \lambda, \; {\omega ^{(1)}} \in {\Delta _{\delta  \oplus \varepsilon }}(b) \oplus \omega.$$
Then 
$$a_{1} \left( {\delta ,{\lambda ^{(1)}},\delta  \oplus \varepsilon ,{\omega ^{(1)}}} \right) = \left( {{\omega ^{(1)}} \oplus \delta  \oplus {\lambda ^{(1)}},{\omega ^{(1)}} \oplus \varepsilon ,{\omega ^{(1)}} \oplus \delta  \oplus {\lambda ^{(1)}} \oplus \varepsilon ,{\lambda ^{(1)}} \oplus \varepsilon } \right).$$
Thus, $g_k$ encrypts the difference 
$$\left( {\delta ,\lambda ,\delta  \oplus \varepsilon ,\omega } \right)\in W(\varepsilon) \text{ to the difference }
\left( {{\delta ^{(1)}},{\lambda ^{(2)}},{\delta ^{(1)}} \oplus \varepsilon ,{\omega ^{(2)}}} \right)\in W(\varepsilon) ,$$ 
where 
 $${\delta ^{(1)}} = {\lambda ^{(1)}} \oplus \delta  \oplus {\omega ^{(1)}}, \; {\lambda ^{(2)}} = {\omega ^{(1)}} \oplus \varepsilon, \; {\omega ^{(2)}} = {\lambda ^{(1)}} \oplus \varepsilon.$$
Therefore, 
$$P\left\{ W(\varepsilon) { \longrightarrow ^g} W(\varepsilon) \right\} = 1.$$

By induction on the number of rounds $l$, we can straightforwardly get
 $$P\left\{ W(\varepsilon) { \longrightarrow_{l}} W(\varepsilon) \right\} = 1.$$
 $\Box$

\textbf{Corollary 1.} For any number of rounds $l \ge 1$, {($W(\varepsilon ), W(\delta ))$ are a pair of impossible $l$-round differential sets for any different $\varepsilon ,\delta  \in {\mathbb{F}_{2^m}}$.}

	The proof follows from Theorem 1.  $\Box$

\textbf{Case II. $a = {a_2}$}.

Let
$$W = \left\{ {(0,\delta ,\delta ,\theta )|(\delta ,\theta ) \in \mathbb{F}_{{2^m}}^2\backslash \left\{ {(0,0)} \right\}} \right\}.$$

\textbf{Theorem} 2.
Let $l$ be any positive integer, $\varepsilon  \in {\mathbb{F}_{2^m}}$. Then
$l$-round differential sets $W{\longrightarrow_{l}} W$ of the ${\rm{PHIGFS}}_{l}({a_2},b)$ holds with probability 1.

\textbf{Proof.}
Note that for any $({x_3},{x_2},{x_1},{x_0}) \in \mathbb{F}_{{2^m}}^4$ we have
$$a_2{({x_3},{x_2},{x_1},{x_0})^T} = {\left( {{x_3} \oplus {x_2} \oplus {x_1},{x_3} \oplus {x_2} \oplus {x_0},{x_3} \oplus {x_1} \oplus {x_0},{x_2} \oplus {x_1} \oplus {x_0}} \right)^T}.$$

Let us consider any nonzero $(\delta ,\theta ) \in \mathbb{F}_{2^m}^2$ and any round key $k \in \mathbb{F}_{{2^m}}^2$.

Note that $v_k$ maps a difference
 $$\left( {0,\delta ,\delta ,\theta } \right) \in W \text{ to a difference } \left( {0,\delta ,\delta ,{\theta ^{(1)}}} \right) \in W $$
 for any ${\theta ^{(1)}} \in {\Delta _\delta }(b) \oplus \gamma.$
Then
$$a_{2} \left( {0,\delta ,\delta ,{\theta ^{(1)}}} \right) =\left( {0,{\theta ^{(1)}} \oplus \delta ,{\theta ^{(1)}} \oplus \delta ,{\theta ^{(1)}}} \right).$$
Thus, $g_k$ encrypts the difference
$$ \left( {0,\delta ,\delta ,\theta } \right) \in W \text{ to the difference }
\left( {0,{\delta ^{(1)}},{\delta ^{(1)}},{\theta ^{(1)}}} \right) \in W,$$
where ${\delta ^{(1)}} = {\theta ^{(1)}} \oplus \delta $.
Therefore,
$$P\left\{ W {\longrightarrow ^g} W \right\} = 1.$$
By induction on the number of rounds $l$, we can straightforwardly get
 $$P\left\{ W { \longrightarrow_{l}} W \right\} = 1.$$
 $\Box$

\textbf{Corollary.} For any the number of rounds $l \ge 1$,  $(W, W')$ are a pair of impossible $l$-round differential sets for any  $W' \subseteq \mathbb{F}_{2^m}^4\backslash( W \cup \{ 0 \} )$.

	The proof follows from Theorem 2.  $\Box$

\bigskip
We would like to mention the solution of Gabriel Tulba-Lecu, Ioan Dragomir and Mircea-Costin Preoteasa (Romania).

\subsection{Problem ``Super dependent S-box''}

\subsubsection{Formulation}
\hypertarget{pr-SBOX}{}

Harry wants to find a super dependent S-box for his new cipher. He decided to use a permutation that is strictly connected with every of its variables. He tries to estimate the number of such permutations.

A vectorial Boolean function  $F(x)=\left(f_1(x),f_2(x),\ldots,f_n(x)\right)$, where $x\in\mathbb{F}_2^n$, is a~\textit{permutation} on~$\mathbb{F}_2^n$ if it is a one-to-one mapping on the set~$\mathbb{F}_2^n$. Its coordinate
function~$f_k(x)$ (that is a Boolean function from $\mathbb{F}_2^n$ to $\mathbb{F}_2$),~\textit{essentially depends} on the variable~$x_j$ if there exist  values $b_1,b_2,\ldots,b_{j-1},b_{j+1},\ldots,b_n \in\mathbb{F}_2$ such that
\begin{equation*}
f_k\left(b_1,b_2,\ldots,b_{j-1},0,b_{j+1},\ldots,b_n\right)\ne f_k\left(b_1,b_2,\ldots,b_{j-1},1,b_{j+1},\ldots,b_n\right).
\end{equation*}
In other words, the essential dependence on the variable $x_j$ of a function $f$ means the presence of $x_j$ in the algebraic normal form of $f$ (the unique representation of a function in the basis of binary operations AND, XOR, and constants 0 and 1).

\medskip
{\bf An example.} Let $n=3$. Then the Boolean function $f(x_1, x_2, x_3) = x_1x_2\oplus x_3$ essentially depends on all its variables; but $g(x_1, x_2, x_3) = x_1x_2\oplus x_2\oplus 1$ essentially depends only on $x_1$ and $x_2$. 

\bigskip

{\bf The problem.} Find the number of permutations on~$\mathbb{F}_2^n$ such that all their coordinate functions essentially depend on all~$n$ variables, namely
\begin{itemize}
\item[{\bf Q1}] Solve the problem for $n = 2,3$.
\item[{\bf Q2}] \underline{\color{blue}{\bf Problem for a special prize!}} Solve the problem for arbitrary $n$.
\end{itemize}

\bigskip

\subsubsection{Solution}

Let us denote the number of super-dependent S-boxes in $n$ variables by $\sdsbox{n}$. 
We can represent $F$ as $F(x) = (f_1(x), \ldots, f_n(x))$, where $x \in \mathbb{F}_2^n$ and $f_1, \ldots, f_n$ are Boolean functions in $n$ variables (i.e. functions of the form $\mathbb{F}_2^n \to \mathbb{F}_2$). Recall that $F$ is a permutation if and only if any its component function $b_1 f_1(x) \oplus\ldots\oplus b_n f_n(x)$, $b \in \mathbb{F}_2^n \setminus \{0\}$, is balanced (i.e. it takes zero and one in the same number of arguments).

The most of solutions provided by the participants contain an answer for Q1. As a rule, an exhaustive search was used. The correct answer for Q1 is the following: $\sdsbox{2} = 0$ and $\sdsbox{3} = 24576$. At the same time, some progress has been made on Q2. A short description of these results is bellow.

The team of Mikhail Kudinov, Denis Nabokov and Alexey Zelenetskiy (Russia) used the inclusion-exclusion principle and provided lower and upper bounds for $\sdsbox{n}$. Their ideas were the following.
Let $\sdfunc{k}$ be the set of functions $f : \mathbb{F}_2^k \to \mathbb{F}_2$ that essentially depend on all its variables $x_1, \ldots, x_k$. Then,
$$
	|\sdfunc{n}| = C_{2^n}^{2^{n - 1}} - \sum_{k = 0}^{n - 1}C_n^k|\sdfunc{k}|,
$$
where $C_n^k$ is a binomial coefficient. Next, let us define for any $i \in \{1, \ldots, n\}$ the sets
$$
	A_i = \{ \text{a permutation } F(x) = (f_1(x), \ldots, f_n(x)) \text{ on } \mathbb{F}_2^n : f_i \notin \sdfunc{n} \}.
$$
It means that the number of super-dependent S-boxes is the following:
$$
	\sdsbox{n} = 2^n! - |A_1 \cup \ldots \cup A_n|.
$$
It is not difficult to see that $|A_{i_1} \cap \ldots \cap A_{i_k}| = |A_1 \cap \ldots \cap A_k|$ for any $1 \leq k \leq n$ and any $k$-element set $\{i_1, \ldots, i_k\} \subseteq \{1, \ldots, n\}$. 
The inclusion-exclusion principle gives us that
$$
	\sdsbox{n} = 2^n! + \sum_{k = 1}^{n}{(-1)^k C_n^k |A_1 \cap \ldots \cap A_k|}.
$$
The cardinalities of intersections can be calculated in the following way:
$$
	|A_1 \cap \ldots \cap A_k| = 2^n! \frac{d(n,k)}{\prod_{i = 0}^{k - 1}(C_{2^{n - i}}^{2^{n - i - 1}})^{2^i}}, 
$$
where $d(n, k)$ is the number of tuples $(f_1, \ldots, f_k)$ consists of Boolean functions in $n$ variables such that $f_1, \ldots, f_k \notin \sdfunc{n}$ and $b_1 f_1 \oplus \ldots \oplus b_k f_k$ is balanced for any $b \in \mathbb{F}_2^k \setminus \{0\}$. It is not easy to calculate $d(n, k)$. However, there is a trivial estimation $d(n, k) \geq C_{2^{n-1}}^{2^{n - 2}}$. Also,
$$
	|A_1| = 2^n! \frac{C_{2^{n}}^{2^{n - 1}} - |\sdfunc{n}|}{C_{2^{n}}^{2^{n - 1}}}.
$$
This can be used to estimate $\sdsbox{n}$:
$$
	2^n! - n|A_1| \leq \sdsbox{n} \leq 2^n! - |A_1|.
$$

The team of Stepan Davydov, Anastasiia Chichaeva and Kirill Tsaregorodtsev (Russia) proposed interesting ideas as well. They noticed that $2^n \mid \sdsbox{n}$, implemented Monte-Carlo simulations for $n = 4$ and $n = 5$ and showed that $\lim_{n \to \infty} \frac{\sdsbox{n}}{2^n!} = 1$. Also, the team pointed out a subclass of super-dependent S-boxes such that even component functions of its representatives essentially depend on all its variables.

The team of Mikhail Borodin, Vitaly Kiryukhin and Andrey Rybkin (Russia) calculated that $S(4) = 19344102217728 = 24 \cdot 16 \cdot 50375266192$. They used that the addition to a super-dependent S-box in $n$ variables of any binary vector from $\mathbb{F}_2^n$ and rearranging its output bits provided a super-dependent S-box as well. In other words, $n! \cdot 2^n \mid S(n)$ holds. Note that some other participants mentioned such kind of classifications (for instance, in the solution above). However, the team most successfully exploited this fact. 

\bigskip

\bigskip

\subsection{Problem ``Quantum entanglement ''}

\subsubsection{Formulation}
\hypertarget{pr-entanglement}{}

	The Nobel Prize in Physics in 2022 was awarded to researchers who experimentally investigated quantum~\textit{entanglement}. One of their studies was devoted to a Greenberger--Horne--Zeilinger state $\ket{GHZ}=\frac{1}{\sqrt{2}}(\ket{000}+\ket{111})$,
	which is an entangled state of three qubits. This state can be created using the following quantum circuit:
	\begin{center}
		\begin{quantikz}[ampersand replacement=\&]
			\lstick{$\ket{0}$} \& \gate{H} \& \ctrl{1} \& \ctrl{2} \& \qw \\
			\lstick{$\ket{0}$} \& \qw \& \targ{} \& \qw \& \qw  \\
			\lstick{$\ket{0}$} \& \qw \& \qw \& \targ{} \& \qw
		\end{quantikz}
	\end{center}

	After the measurement, the probability to find the system described by~$\ket{GHZ}$ in the state~$\ket{000}$ or in the state~$\ket{111}$ is equal to~$1/2$.
	
	When we make measurements in quantum physics, we are able to make~{\it post-selection}. For example, if we post-select the events when the first qubit was in state~$\ket{0}$, the second and the third qubits will also be found in the state~$\ket{0}$ for sure, this is actually what entanglement means. We also see that the post-selection destroys entanglement of two remaining qubits.
	
\begin{itemize}
\item[{\bf Q1}] But what will happen, if we post-select the events when the 1st qubit is in the Hadamard state~$\ket{+}=\frac{1}{\sqrt{2}}(\ket{0}+\ket{1})$? How can we perform this kind of post-selection if the result of each measurement of a qubit state can be only~$0$ or~$1$ and we can only post-select these events? Will the two remaining qubits be entangled after post-selection? Design the circuit which will provide an answer.

\item[{\bf Q2}]  \underline{\color{blue}{\bf Problem for a special prize!}} There are two different classes of three-qubit entanglement. One of them is
\begin{equation*}
	\ket{GHZ}=\frac{1}{\sqrt{2}}(\ket{000}+\ket{111}),
\end{equation*}
  and the other is
\begin{equation*}
	\ket{W}=\frac{1}{\sqrt{3}}(\ket{001}+\ket{010}+\ket{100}).
\end{equation*}
Discuss the possible ideas how the difference between these states can be found with the usage of post-selection and measurement. Don't forget that you need to verify entanglement for both types of states!
\end{itemize}

{\small
\noindent{\bf Remark.} Let us briefly formulate the key points of quantum circuits.
A qubit is a two-level quantum mechanical system whose state $\ket{\psi}$ is the superposition of basis quantum states $\ket{0}$ and $\ket{1}$. The superposition is written as $\ket{\psi}=\alpha_0\ket{0}+\alpha_1\ket{1}$, where $\alpha_0$ and $\alpha_1$ are complex numbers, called amplitudes, that possess $\left|\alpha_0\right|^2+\left|\alpha_1\right|^2=1$. The amplitudes $\alpha_0$ and $\alpha_1$ have the following physical meaning: after the measurement of a qubit which has the state $\ket{\psi}$, it will be observed in the state $\ket{0}$ with probability $\left|\alpha_0\right|^2$ and in the state $\ket{1}$ with probability $\left|\alpha_1\right|^2$.
Note that we can measure qubit, initially given in the state~$\ket{\psi}=\alpha_0\ket{0}+\alpha_1\ket{1}$, in other basis, for example Hadamard basis~$\ket{+}=\frac{1}{\sqrt{2}}(\ket{0}+\ket{1})$ and~$\ket{-}=\frac{1}{\sqrt{2}}(\ket{0}+\ket{1})$. In order to do this, we consider the state in the form~$\ket{\psi}=\alpha_0'\ket{+}+\alpha_1'\ket{-}$, where complex amplitudes~$\alpha_0',\alpha_1'$ have the same physical meaning as $\alpha_0$ and $\alpha_1$. Then we can calculate the probability that the qubit will be in the state~$\ket{+}$ or~$\ket{-}$ after the measurement and consider the process of post-selection in this case.
In order to operate with multi-qubit systems, we consider the bilinear operation $\otimes:\ket{x},\ket{y}\rightarrow\ket{x}\otimes\ket{y}$ on $x,y\in\{0,1\}$ which is defined on pairs $\ket{x},\ket{y}$, and by bilinearity is expanded on the space of all linear combinations of $\ket{0}$ and $\ket{1}$. When we have two qubits in states $\ket{\psi}$ and $\ket{\varphi}$ correspondingly, the state of the whole system of these two qubits is
$
\ket{\psi}\otimes\ket{\varphi}.
$
In general, for two qubits we have
$
\ket{\psi}=\alpha_{00}{\ket{0}\otimes\ket{0}}+\alpha_{01}\ket{0}\otimes\ket{1}+\alpha_{10}\ket{1}\otimes\ket{0}+\alpha_{11}\ket{1}\otimes\ket{1}.
$
The physical meaning of complex numbers $\alpha_{ij}$ is the same as for one qubit, so we have the essential restriction $|\alpha_{00}|^2+|\alpha_{01}|^2+|\alpha_{10}|^2+|\alpha_{11}|^2=1$. We use more brief notation $\ket{a}\otimes\ket{b}\equiv\ket{ab}$.
By induction, this process is expanded on the case of three qubits and more.
Mathematically, the entanglement of $n$-qubits state means that we can not consider this state in the form $\ket{\psi}=\ket{\varphi_1}\otimes\ket{\varphi_2}$, where $\ket{\varphi_1}$ and $\ket{\varphi_2}$ are some states of~$m$ and~$n-m$ qubits, correspondingly.
In order to verify your circuits, you can use different quantum circuit simulators, for example, see \cite{quirk}.
}

\subsubsection{Solution}

The first question. 

The circuit for creation of the Greenberger--Horne--Zeilinger state $\ket{GHZ}$ is the following:
\begin{center}
	\begin{quantikz}[ampersand replacement=\&]
		\lstick{\ket{0}} \& \gate{H} \& \ctrl{1} \& \ctrl{2} \& \qw\rstick[wires=3]{$\frac{\ket{000}+\ket{111}}{\sqrt{2}}$} \\
		\lstick{\ket{0}} \& \qw \& \targ{} \& \qw \& \qw \\
		\lstick{\ket{0}} \& \qw \& \qw \& \targ{2} \& \qw
	\end{quantikz}
\end{center}

First, we need to post-select events when the first qubit is in the Hadamard state ${\ket{+}=\frac{1}{\sqrt{2}}(\ket{0}+\ket{1})}$. For this purpose, we make an Hadamard gate prior to the measurement of the first qubit. After this we perform a post-selection.

The state $\ket{GHZ}$ can be written as
\begin{equation*}
	\ket{GHZ}=\frac{\ket{000}+\ket{111}}{\sqrt{2}}=\ket{+}\frac{(\ket{00}+\ket{11})}{2\sqrt{2}}+\ket{-}\frac{(\ket{00}-\ket{11})}{2\sqrt{2}},
\end{equation*}
where~$\ket{\pm}=(\ket{0}\pm\ket{1})/\sqrt{2}$. It means that if we select the first qubit in the state $\ket{+}$, the other qubits will be in the entangled Bell state $\ket{\Phi^+}=\frac{1}{\sqrt{2}}(\ket{00}+\ket{11})$. This state can be detected using a CNOT gate followed by the Hadamard gate. The whole circuit is
\begin{center}
	\begin{quantikz}[ampersand replacement=\&]
		\lstick{\ket{0}} \& \gate{H} \& \ctrl{1} \& \ctrl{2} \& \gate{H} \& \push{\ket{0}\bra{0}} \& \qw \& \qw \& \qw \\
		\lstick{\ket{0}} \& \qw \& \targ{} \& \qw \& \qw \& \qw \& \ctrl{1} \& \gate{H} \& \qw \\
		\lstick{\ket{0}} \& \qw \& \qw \& \targ{2} \& \qw \& \qw \& \targ{} \& \qw \& \qw
	\end{quantikz}
\end{center}

The second question that supposed to be the open problem was solved during the Olympiad by the team of Viet-Sang Nguyen, Nhat Linh LE Tan and Phuong Hoa Nguyen (France). Here we provide the solution.

If we measure any qubit of the state~$\ket{GHZ}$ and known the result of the measurement, the state of two rest qubits immediately become known to us. Thus, the state of the whole system of 3 qubits is an entangled one. But the state of two rest qubits after the measurement of any qubit is separable.

When we measure the first qubit of the state~$\ket{W}$, the result is 0 with probability~$2/3$, and~$1/3$ for the result 1. When the state of the first qubit is measured 1, the system collapses to a separable state~$\ket{00}$ hence it is not entangled anymore. However, when the state of the first qubit is measured 0, the remaining two qubits become the maximally entangled state of two qubits. Given the measurement of one qubit as~$\ket{1}$, we can deduce the information of the other two because there is correlation in the information between qubits. Thus,~$\ket{W}$ is an entangled quantum state of three qubits.

Different from~$\ket{GHZ}$, measuring one qubit in~$\ket{W}$ creates an entangle state of two remaining qubit with probability~$2/3$. While being in~$\ket{GHZ}$, the system collapses to a separable state after measurement of any qubit.

The post-selection procedure for the state~$\ket{GHZ}$ was discussed in the first question, so the same technique can be applied for the state~$\ket{W}$. This state contains residual entanglement after measurement of a qubit, we can post-selection the third qubit in the state~$\ket{0}$ to attain the Bell state of the remaining qubits.
\begin{center}
	\begin{quantikz}[ampersand replacement=\&]
		\lstick{\ket{0}} \& \gate{R_y(\theta)} \& \ctrl{1} \& \qw \& \ctrl{1} \& \targ{} \& \push{\ket{0}\bra{0}} \& \qw \\
		\lstick{\ket{0}} \& \qw \& \gate{H} \& \ctrl{1} \& \targ{} \& \qw \& \qw \& \qw \\
		\lstick{\ket{0}} \& \qw \& \qw \& \targ{} \& \qw \& \qw \& \qw \& \qw
	\end{quantikz}
\end{center}
Here $R_y(\theta)$ gate is a single-qubit rotation through angle $\theta=2\arccos(1/\sqrt{3})$ (radians) around the $y$-axis.

The state~$\ket{W}$ has the following representation
\begin{align*}
	\ket{W}&=\frac{1}{\sqrt{3}}\big(\ket{001}+\ket{010}+\ket{100}\big)\\
	&=\frac{1}{\sqrt{6}}\big(\ket{00+}+\ket{01+}+\ket{10+}-\ket{00-}+\ket{01-}+\ket{10-}\big)
\end{align*}
If we can post-select the state $\ket{+}$ for the third qubit, we have:
\begin{equation*}
	\frac{1}{\sqrt{3}}\big(\ket{00+}+\ket{01+}+\ket{10+}\big)=\frac{1}{\sqrt{3}}\big(\ket{00}+\ket{01}+\ket{10}\big)\otimes\ket{+},
\end{equation*}
which is equivalent to a circuit with two entangled qubits similar to~$\ket{W}$ and a independent qubit in the state~$\ket{+}$. There is a correlation between~2 rest qubits in this system: if we measure~1 in one qubit, the other must be~0. Hence, we have an entanglement between~2 qubits.

\begin{center}
	\begin{quantikz}[ampersand replacement=\&]
		\lstick{\ket{0}} \& \gate{R_y(\theta)} \& \ctrl{1} \& \qw \& \ctrl{1} \& \targ{} \& \push{\ket{0}\bra{0}} \& \qw \\
		\lstick{\ket{0}} \& \qw \& \gate{H} \& \ctrl{1} \& \targ{} \& \qw \& \qw \& \qw \\
		\lstick{\ket{0}} \& \qw \& \qw \& \targ{} \& \qw \& \qw \& \qw \& \qw
	\end{quantikz}
\end{center}

The circuit for the system with third qubit in the state~$\ket{+}$ and 2 entangled qubits
\begin{center}
	\begin{quantikz}[ampersand replacement=\&]
		\lstick{\ket{0}} \& \gate{H} \& \qw \& \qw \& \qw \\
		\lstick{\ket{0}} \& \gate{R_y(\theta)} \& \ctrl{1} \& \targ{} \& \qw \\
		\lstick{\ket{0}} \& \qw \& \gate{H} \& \qw \& \qw
	\end{quantikz}
\end{center}

In conclusion, when measuring one qubit of the state $\ket{W}$, the state of the other two qubits are still entangled. But after the measurement of any qubit of the state~$\ket{GHZ}$, the states of the rest qubits become known. When post measuring Hadamard~$\ket{+}$ state, both~$\ket{W}$ and~$\ket{GHZ}$ states return outcome equivalent to a separate qubit in the state~$\ket{+}$ and a entangled state of two qubits.

We also would like to mention participants who made a progress in solution, that is the team of Gabriel Tulba-Lecu, Mircea-Costin Preoteasa and Ioan Dragomir (Romania), the team of Mikhail Kudinov, Denis Nabokov and Alexey Zelenetskiy (Russia), the team of Himanshu Sheoran, Gyumin Roh and Yo Iida (India, South Korea, Japan) and the team of
Donat Akos Koller, Csaba Kiss and Marton Marits (Hungary).

\section{Acknowledgement}

The authors are grateful to Andrey Nelyubin, Yuliya Maksimlyuk, Irina Khilchuk, Darya Zyubina and Valeria Kochetkova for useful discussions and various help. 
{The work of the first, second, third, fifth, seventh and eighth authors was supported by the Mathematical Center in Akademgorodok under the agreement No. 075-15-2022-282 with the Ministry of Science and Higher Education of the Russian Federation. The work of the ninth author was supported by the Kovalevskaya North-West Centre of Mathematical Research under the agreement No. 075-02-2023-934 with the Ministry of Science and Higher Education of the Russian Federation. The work is also supported by Novosibirsk State University and Kryptonite.}

\bigskip

\end{document}